\newtheorem{theorem}{Theorem}
\newtheorem{proposition}{Proposition}
\newtheorem{lemma}{Lemma}
\newtheorem{corollary}[theorem]{Corollary}
\newtheorem{observation}{Observation}
\newcommand{\reals}{\mathbb{R}}
\renewcommand{\P}{\mathcal{P}}
\newcommand{\D}{\mathcal{D}}
\renewcommand{\S}{\mathcal{S}}
\DeclareMathOperator{\Vor}{Vor} 
\DeclareMathOperator{\DT}{DT} 
\DeclareMathOperator{\DG}{DG} 
\DeclareMathOperator{\SG}{SG} 
\DeclareMathOperator{\MNG}{MNG} 
\DeclareMathOperator{\CH}{CH} 
\DeclareMathOperator{\RIG}{RIG} 
\let\setminus\smallsetminus
\let\emptyset\varnothing
\let\bd\partial
\title{Witness (Delaunay) Graphs}
\author{Boris Aronov%
  \thanks{Department of Computer Science and Engineering, Polytechnic
    Institute of NYU, Brooklyn, New York~~11201, USA.  Research
    partially supported by a grant from the U.S.-Israel Binational
    Science Foundation, by NSA MSP Grant H98230-06-1-0016, and NSF
    Grant CCF-08-30691.}
  \and
  Muriel Dulieu\footnotemark[1]
  \and
  Ferran Hurtado\thanks{Departament de Matem\`{a}tica Aplicada II, Universitat Polit\`{e}cnica de Catalunya,
    Barcelona, Spain.
    Partially supported by projects MEC MTM2006-01267,
    MTM2009-07242, Gen.~Catalunya DGR 2005SGR00692 and 2009SGR1040.}}
\date{17 May 2010}
\begin{document}

\maketitle
\begin{abstract}
Proximity graphs are used in several areas in which a
neighborliness relationship for input data sets is a useful tool
in their analysis, and have also
received substantial attention from the graph drawing community,
as they are a natural way of implicitly representing graphs.
However, as a tool for graph representation, proximity graphs have
some limitations that may be overcome with suitable
generalizations.

We introduce a generalization, \emph{witness graphs}, that
encompasses both the goal of more power and flexibility for graph
drawing issues and a wider spectrum for neighborhood analysis. We
study in detail two concrete examples, both related to Delaunay
graphs, and consider as well some problems on stabbing geometric
objects and point set discrimination, that can be naturally described in
terms of witness graphs.
\end{abstract}

\thispagestyle{plain}

\section{Introduction and preliminary definitions}\label{section:introduction}

\emph{Proximity graphs} are used in several areas in which a
neighborliness relationship for input data sets is a useful tool
in their analysis and use, see \cite{JT92} for a survey.  Examples
of such areas are computer vision, geographic analysis, pattern
classification, computational morphology, and spatial analysis. On
the other hand, proximity graphs have also received substantial
attention from the graph drawing community, as they are a natural
way of implicitly representing graphs; a survey
of such results appeared in \cite{BLL94} and has been extended and
updated in \cite{Li08}.

As a tool for graph representation, proximity graphs have some
limitations that may be overcome with suitable generalizations. An
example of such an extension is the concept of \emph{weak
proximity graphs} \cite{BLW06}. Here we introduce a generalization
that encompasses both the goal of more power and flexibility for
graph drawing issues and a wider spectrum for neighborhood
analysis.

In general, given a point set $P$ and a set of geometric shapes
$S$, a proximity graph is a graph $G=(P,E)$ with $P$ as the vertex
set and two points $a$ and $b$ being adjacent if and only if there is a suitable shape
$\Gamma$, defined by $a$ and $b$, from $S$---their \emph{region of influence}---that covers
them but no other point from $P$; the presence of another point is
referred to as an \emph{interference}. Whether the two points have
to be on the boundary of the shape $\Gamma$, whether $\Gamma$ is
uniquely determined by them, and whether the interference is considered
only if interior to $\Gamma$, depends on the specific problem
studied as also does the family of shapes under consideration; see
\cite{JT92,Li08} for an extensive list of examples of proximity
graphs.

A \emph{witness graph} $G=(V,E)$ is defined by a quadruple
$(P,S,W,\pm)$ in which $P=V$ is the set of \emph{vertex points} (or
just \emph{vertices}), $S$ provides the geometric shapes, and $W$ is a
second point set, consisting of the \emph{witness points} (or just
\emph{witnesses}). In the \emph{positive witness} version~($+$), the
tentative adjacency between $a$ and $b$ is accepted if and only if a
witness point is covered by at least one of the regions of influence
defined by $a$ and $b$. In the \emph{negative witness} version ($-$),
a witness inside the interaction region would destroy the tentative
adjacency, hence there is an adjacency between $a$ and $b$ if at least
one of their regions of influence is free of any witness. Notice that
in both cases we only pay attention to the presence of witnesses in
the regions of influence, not of points from $P$. In a third variation
one may admit the presence of both negative and positive witnesses and
use a combined decision rule; we do not pursue this possibility here.

To the best of our knowledge this family of graphs has not been
introduced before in its full generality, yet, not surprisingly, the
situation has been considered in more or less explicit form for some
specific graphs.  Ichino and Slansky \cite{IS85} defined the
\emph{rectangular influence graph}, $\RIG(P)$, in which two points
$p,q\in P$ are adjacent when the rectangle having them as opposite
corners (the \emph{box} they define) contains no point from $P$. In
the same paper, they defined the \emph{mutual neighborhood graph}
$\MNG(P|Q)$, in which $p,q\in P$ are adjacent when the associated box
contains no point from $Q$, and they studied some properties that can
be derived by considering simultaneously $\MNG(P|Q)$ and $\MNG(Q|P)$. In
\cite{BCO92}, De~Berg, Cheong and Overmars defined the \emph{dominance
  in a set $P$ with respect to a set $Q$} and gave an efficient
algorithm for its computation: $a\in P$ dominates $b\in P$ when
$x(a)\ge x(b)$, $y(a)\ge y(b)$, and the box defined by $a$ and $b$
contains no point from $Q$.  Finally, McMorris and Wang \cite{MW00}
defined the \emph{sphere-of-attraction graphs} in which from every
point of $p\in P$ taken as center a ball is grown until a first point
from $Q$ is encountered; the graph is then defined on $P$ as a ball
intersection graph. They obtained a characterization in dimension one
and initiated the study in higher dimensions.

In the present paper, we consider two concrete examples, both
related to Delaunay graphs, one for positive witnesses and one for
negative ones.  Other witness graphs such as the \emph{witness
Gabriel graph} and the \emph{witness
  rectangle-of-influence graph} are studied in the companion papers
\cite{ADH09,ADH08}. A systematic study is developed in \cite{thesis}.

We define the \emph{witness Delaunay graph} of a point set $P$ of
\emph{vertices} in the plane, with respect to a point set $W$ of
\emph{witnesses}, denoted $\DG^{-}(P,W)$, as the graph with vertex
set $P$ in which two points $x,y\in P$ are adjacent if and only if
there is an open disk that does not contain any witness $w\in W$
whose bounding circle passes through $x$ and $y$. It is a
negative-witness graph in which  the shapes are all the disks in
the plane whose boundary contains two points from $P$.  When
$W=\varnothing$ the graph $\DG^{-}(P,\varnothing)$ is simply the
complete graph $K_{|P|}$. When $W=P$ the graph $\DG^{-}(P,P)$ is
precisely the Delaunay graph $\DG(P)$, which under standard
non-degeneracy assumptions is a triangulation and is denoted
$\DT(P)$ (see, e.g., \cite{AK00,Fo04}).  The latter example
illustrates the fact that the use of a witness set gives a
generalization of the basic Delaunay structure. The properties of
$\DG^{-}(P,W)$ are studied in Section~\ref{section:delaunay}.

The \emph{square graph} of a point set $P$ in the plane, with
respect to a point set $W$ of witnesses, denoted $\SG^{+}(P,W)$, is
the graph with vertex set~$P$, in which two points $x,y\in P$ are
adjacent when there is an axis-aligned square with $x$ and $y$ on
its boundary whose interior contains some witness point $q\in W$.
It is a positive-witness graph in which the shapes are all the
 axis-aligned squares in the plane whose boundary contains two points from $P$.
Observe that a negative-witness version $\SG^-(P,W)$ of this
graph, with $W=P$, would be the standard Delaunay graph for the
$L_{\infty}$ metric, and hence we are studying here the
positive-witness--variant of this Delaunay structure. The graph
$\SG^{+}(P,W)$ is discussed in Section~\ref{section:squares}.

In this work we describe algorithms for the computation of these
graphs and prove several of their fundamental properties.  We also
give a complete characterization of the combinatorial graphs that
admit a realization as $\SG^{+}(P,W)$ for suitable sets $P$ and $W$, a
kind of result that, however, remains elusive for $\DG^{-}(P,W)$.
In Section~\ref{section:stabbing}, we also present some
related results on stabbing geometric objects, which can
be essentially described as follows: given a point set $P$,
find a second point set $W$, as small as possible, such that
no pair of points $p,q\in P$ have adjacent regions in the
Voronoi diagram of $P\cup W$.

We use standard graph terminology as in \cite{CL04}; in
particular, for a graph $G=(V,E)$ we write $xy\in E$ or $x\sim y$
to indicate that $x,y\in V$ are adjacent vertices of $G$.
The terms \emph{closed} and \emph{open} are used in the
sense of closed and open sets (sets with or without their boundary).

\section{Witness Delaunay graphs}\label{section:delaunay}

Consider a witness Delaunay graph $\DG^{-}(P,W)$ of a point set $P$
with respect to a witness set $W$.  We assume that the set $P \cup W$
is \emph{in general position}, i.e., that no three distinct
points in $P\cup W$ are collinear and that no four distinct points in
$P\cup W$ are concyclic.  We denote by $E$ the edge set of the graph,
that will be drawn as segments as usual for Delaunay graphs.  Let $n
\mathop{:=} \max\{|P|,|W|\}$.
We say that a disk \emph{covers} a witness if the witness lies in its interior.

Note that, by definition of the witness Delaunay graph, the presence of an edge
between vertices $p,q \in P$ is independent of the fact that $p$
and/or $q$ might be witnesses, since any open disk whose boundary
passes through $p$ and $q$ does not cover either point.

First, a simple geometric observation:
\begin{observation} \label{obs:shrunken-disk}
  If $D$ is a closed disk containing points $p$ and $q$ then there exists a
  disk $D_{pq}\subset D$ whose boundary passes through $p$ and $q$.
\end{observation}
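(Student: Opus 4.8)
The plan is to reduce the statement to a one-parameter family of candidate disks and then select a fitting member via a short triangle-inequality argument. First I would observe that every disk whose boundary passes through both $p$ and $q$ has its center on $\ell$, the perpendicular bisector of the segment $pq$; conversely, each point $c\in\ell$ determines a unique such disk $D_c$, namely the one centered at $c$ with radius $|cp|=|cq|$. So the entire problem reduces to exhibiting a single point $c\in\ell$ for which $D_c\subseteq D$.

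Next I would record the elementary containment criterion for nested disks: writing $O$ and $R$ for the center and radius of $D$, the disk $D_c$ satisfies $D_c\subseteq D$ precisely when the distance between centers plus the radius of $D_c$ is at most $R$, that is, when $|cO|+|cu|\le R$ for $u\in\{p,q\}$ (recall $|cp|=|cq|$ is exactly the radius of $D_c$). Hence it suffices to produce $c\in\ell$ satisfying $|cO|+|cu|\le R$ for one of $u=p$ or $u=q$.

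The key step is the choice of $c$, which I would make by a reflection-type observation. The line $\ell$ separates the plane into the closed half containing $p$ and the closed half containing $q$, and $O$ lies in exactly one of them, say the half containing $q$. I then take $u:=p$, the vertex lying on the far side of $\ell$ from $O$, and let $c$ be the point where the segment $Ou$ meets $\ell$ (taking $c=O$ when $O\in\ell$). Since $c$ lies on the segment $Ou$, the collinear triple gives the equality $|cO|+|cu|=|Ou|$, and because $u\in\{p,q\}\subseteq D$ we have $|Ou|\le R$. Combining these yields $|cO|+|cu|\le R$, so the disk $D_{pq}:=D_c$ is contained in $D$ and has $p$ and $q$ on its boundary, as required.

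I expect the only subtlety to be bookkeeping around the degenerate configurations ($O\in\ell$, or $p$ and $q$ already on $\bd D$), all of which collapse into the same inequality, together with the reminder that containment is taken for \emph{closed} disks, so that the boundary case $|cO|+|cu|=R$ still certifies $D_c\subseteq D$. No genuine obstacle arises: the reflection argument always places the chosen vertex across $\ell$ from $O$ and at distance at most $R$, which is exactly what the containment criterion needs.
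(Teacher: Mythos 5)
Your proof is correct, but it takes a genuinely different route from the paper's. The paper constructs $D_{pq}$ by a two-stage shrinking process: first shrink $D$ concentrically until its boundary first touches $p$ or $q$, then apply a homothety centered at that touching point until the boundary also reaches the other point. You instead produce the disk explicitly: you parametrize all disks through $p$ and $q$ by their centers $c$ on the perpendicular bisector $\ell$ of $pq$, reduce containment to the standard criterion $|cO|+|cp|\le R$, and then pick $c$ as the intersection of $\ell$ with the segment joining $O$ to whichever of $p,q$ lies on the far side of $\ell$, so that collinearity turns the criterion into $|Ou|\le R$, which holds because $u\in D$. Both arguments are elementary and complete; your one slight wording slip (``$O$ lies in exactly one'' of the two \emph{closed} half-planes fails when $O\in\ell$) is already repaired by your parenthetical choice $c=O$ in that case, so there is no gap. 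What each approach buys: yours is fully explicit---it names the center and certifies containment by a single triangle inequality, with no limiting or ``shrink until about to lose'' language. The paper's shrinking argument, on the other hand, uses nothing about circles beyond closure of the shape family under scaling and homothety, which is why the paper can repeat it verbatim for axis-aligned squares in Observation~2; your perpendicular-bisector parametrization is specific to disks (the locus of centers of axis-aligned squares through two given points is not a line), so it would not transfer to that later observation without reworking.
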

\begin{proof}
  Let $c$ be the center of $D$.  Shrink~$D$ while keeping its center
  at $c$ until it is about to lose $p$ or $q$.  Let the resulting disk
  be $D'$.  Without loss of generality, let $p \in \bd D'$.
  Shrink~$D'$ by a homothety with center $p$ until it is about to lose
  $q$.  The result is the desired disk~$D_{pq}$.
\end{proof}

We start with the computation of the witness Delaunay graph, which
requires some lemmas; the first one is immediate from the
definition of $\DG^{-}(P,W)$:

\begin{lemma}
  \label{lem:adjacency-condition}
  Two points $p, q\in P$ are adjacent in $\DG^{-}(P,W)$ if and only if
  they are neighbors in $\DT(W\cup\{p,q\})$.
\end{lemma}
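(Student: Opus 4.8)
The plan is to prove the biconditional by directly unwinding both definitions and using Observation~\ref{obs:shrunken-disk} to pass between "some disk with $p,q$ on its boundary" and "some disk containing $p,q$." The key point to exploit is that, by the remark preceding the observation, the edge $pq$ in $\DG^{-}(P,W)$ depends only on the witnesses in $W$, not on whether $p$ or $q$ themselves happen to lie in $W$; this is exactly why the Delaunay graph on the right is taken on $W\cup\{p,q\}$ rather than on all of $P$.

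\begin{proof}
By definition, $p,q\in P$ are adjacent in $\DG^{-}(P,W)$ if and only if there is an open disk $D$, with $p$ and $q$ on its bounding circle, that covers no witness of $W$. Since $p,q$ lie on $\bd D$, they are not in the (open) interior of $D$, so this condition is unaffected by whether $p$ or $q$ belong to $W$. Thus $p$ and $q$ are adjacent in $\DG^{-}(P,W)$ exactly when the open disk $D$ contains no point of $W\cup\{p,q\}$ in its interior while having $p$ and $q$ on its boundary. By the standard empty-disk characterization of Delaunay neighbors, this is precisely the statement that $p$ and $q$ are neighbors in $\DT(W\cup\{p,q\})$, giving one direction directly and the other by reading the equivalence backwards.
\end{proof}

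The forward direction is immediate; the only place requiring care is the converse, where the Delaunay neighbor condition for $\DT(W\cup\{p,q\})$ is usually phrased as "there exists a \emph{closed} disk with $p,q$ on its boundary whose interior is empty of the other sites." I expect the main (minor) obstacle to be reconciling the "boundary-passing" formulation in the definition of $\DG^{-}(P,W)$ with whichever empty-disk formulation of Delaunay adjacency the paper adopts; if the two formulations are not literally identical, I would invoke Observation~\ref{obs:shrunken-disk} to shrink any witness-free closed disk containing $p$ and $q$ down to one whose boundary passes through both, preserving emptiness since shrinking only removes interior points. This makes the two notions coincide and closes the argument cleanly.
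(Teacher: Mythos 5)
Your proof is correct and matches the paper's treatment: the paper states this lemma without proof, calling it ``immediate from the definition,'' and the key point you isolate---that a disk whose boundary passes through $p$ and $q$ cannot cover either of them, so their possible membership in $W$ is irrelevant---is exactly the remark the paper makes just before Observation~\ref{obs:shrunken-disk}. Your extra care about closed-versus-open disk formulations of Delaunay adjacency is a reasonable safeguard but not needed, since both conditions are phrased in terms of empty open interiors and coincide literally.
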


\begin{lemma} \label{lem:neighborsBySector}
Let $w_1,\dots,w_t$ be the Delaunay neighbors of $p\in P$ in
$\DT(W\cup\{p\})$ given in counterclockwise radial order, and let
$q\in P$ be a point whose radial position around $p$ is between
$w_i$ and $w_{i+1}$.  If $\measuredangle w_ipw_{i+1}\ge\pi$, then
$p$ and  $q$ are adjacent in $\DG^{-}(P,W)$; if $\measuredangle
w_ipw_{i+1}<\pi$, then $p$ and  $q$ are adjacent in $\DG^{-}(P,W)$
if and only if $q$ lies in the interior of the circle through $p$,
$w_i$, and $w_{i+1}$.
\end{lemma}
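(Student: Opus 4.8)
The plan is to work with the empty-disk reformulation of adjacency: by the definition of $\DG^{-}(P,W)$ (equivalently, by Lemma~\ref{lem:adjacency-condition}), $p$ and $q$ are adjacent if and only if there is an open disk through $p$ and $q$ whose interior contains no witness. I would then split along the stated dichotomy. Observe first that the gap $\measuredangle w_ipw_{i+1}\ge\pi$ occurs precisely when $p$ is a vertex of the convex hull of $W\cup\{p\}$ and $w_i,w_{i+1}$ are its two hull-edge neighbors: around an interior Delaunay vertex the incident triangles tile the full angle, so every consecutive gap is $<\pi$, whereas at a hull vertex the incident triangles tile only the (convex) interior angle and leave a single reflex exterior gap. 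In particular all of $W$ then lies in the closed convex cone at $p$, of angle $<\pi$, complementary to the sector containing $q$.

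For the reflex case I would exhibit a witness-free disk through $p$ and $q$ directly. Since $p$ is a hull vertex and $q$ lies strictly in the exterior sector (it cannot lie on a bounding ray, by general position), I can pick a line $\ell$ through $p$ with $W$ contained in one closed half-plane $\overline{H^+}$ and $q$ in the open half-plane $H^-$. Now consider the disks tangent to $\ell$ at $p$ with center moving into $H^-$: as the radius ranges over $(0,\infty)$ these disks sweep out all of $H^-$, each point of $H^-$ lying on exactly one of them, so some radius puts $q$ on the boundary. The resulting disk lies in $\overline{H^-}$ and meets $\ell$ only at $p$, hence its interior is disjoint from $\overline{H^+}\supseteq W$; it is the required witness-free disk, so $p\sim q$.

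For the non-reflex case, $pw_iw_{i+1}$ is a Delaunay triangle of $\DT(W\cup\{p\})$, so its circumdisk $D_i$, bounded by the circle $C_i$ through $p,w_i,w_{i+1}$, contains no witness in its interior. If $q$ lies in the interior of $D_i$, then $D_i$ is a closed disk containing both $p$ and $q$, and Observation~\ref{obs:shrunken-disk} yields a disk $D_{pq}\subseteq D_i$ with $p,q$ on its boundary; since $D_{pq}\ne D_i$ its interior lies in the interior of $D_i$ and is therefore witness-free, giving $p\sim q$. Conversely, suppose $q$ lies outside $D_i$ (it cannot lie on $C_i$, as $p,q,w_i,w_{i+1}$ are not concyclic). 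Because $q$ is in the sector, the ray from $p$ through $q$ crosses the chord $w_iw_{i+1}$, so $w_i$ and $w_{i+1}$ lie on opposite sides of the line $pq$, and $q$ lies on the far side of that chord from $p$; in particular $p,w_i,q,w_{i+1}$ form a convex quadrilateral. I would then prove the key inequality $\angle pw_iq+\angle pw_{i+1}q>\pi$: on $C_i$ the cyclic quadrilateral $pw_iqw_{i+1}$ gives equality, and moving $q$ outward across $C_i$ strictly decreases $\angle w_iqw_{i+1}$ by inscribed-angle monotonicity, which through the angle-sum $2\pi$ of the quadrilateral forces $\angle pw_iq+\angle pw_{i+1}q$ above $\pi$.

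Finally I would convert this inequality into non-adjacency. For any circle $C'$ through $p$ and $q$, the chord $pq$ splits it into two arcs whose inscribed angles sum to $\pi$, one arc on each side of line $pq$; since $w_i$ and $w_{i+1}$ lie on opposite sides, $w_i$ is inside $C'$ exactly when $\angle pw_iq$ exceeds the inscribed angle on its side, and similarly for $w_{i+1}$. Were both outside, these two angles would sum to at most $\pi$, contradicting the inequality; hence every disk through $p$ and $q$ covers $w_i$ or $w_{i+1}$, no witness-free disk exists, and $p\not\sim q$. The main obstacle is exactly this converse direction of the non-reflex case: the delicate points are verifying that $q$ sits on the far side of the chord (so that line $pq$ genuinely separates $w_i$ from $w_{i+1}$) and packaging the inscribed-angle bookkeeping into the single clean condition $\angle pw_iq+\angle pw_{i+1}q>\pi$, which rules out all circles simultaneously rather than just one.
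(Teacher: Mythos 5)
Your proof is correct, and its overall skeleton matches the paper's: the reflex case via a witness-free region cut off by a line through $p$, and the non-reflex case via the empty circumdisk $D$ of the Delaunay triangle $pw_iw_{i+1}$, with Observation~\ref{obs:shrunken-disk} handling $q$ interior to $D$. Where you genuinely diverge is the converse of the non-reflex case ($q$ exterior to $D$), which is indeed the heart of the lemma. The paper dispatches it in one line using Delaunay machinery: since $D$ is still empty after inserting $q$, the edge $w_iw_{i+1}$ survives in $\DT(W\cup\{p,q\})$; it crosses $pq$, so by planarity of the Delaunay triangulation $pq$ is not an edge there, and Lemma~\ref{lem:adjacency-condition} then rules out $p\sim q$. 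You instead establish the quantitative criterion $\measuredangle pw_iq+\measuredangle pw_{i+1}q>\pi$ and convert it, via inscribed angles on the two arcs cut off by the chord $pq$ (using that $w_i$ and $w_{i+1}$ lie on opposite sides of line $pq$), into the statement that \emph{every} circle through $p$ and $q$ strictly contains $w_i$ or $w_{i+1}$. Your supporting verifications are sound: $q$ does lie beyond the chord $w_iw_{i+1}$ because the cone intersected with $D$ on $p$'s side of that chord is exactly the triangle, and the exterior-point angle comparison combined with the angle sum $2\pi$ of the convex quadrilateral gives the inequality. The trade-off: the paper's route is shorter but leans on two external facts (Lemma~\ref{lem:adjacency-condition} and non-crossing of Delaunay edges), while yours is elementary and self-contained; moreover, the two-witness blocking criterion you isolate is precisely the tool the paper itself uses later in the proof of Proposition~\ref{prop:stabbingCirclesConvex} (there phrased as $\measuredangle vw_1p+\measuredangle vw_2p>180^\circ$), so your argument makes that connection explicit and reusable. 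As a minor point, your explicit disk tangent at $p$ to the separating line is actually a bit more careful than the paper's reflex-case step, which applies the shrinking observation directly to a half-plane.
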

\begin{proof}
  If $\measuredangle w_ipw_{i+1}\ge\pi$, then $p$ must be a vertex of
  the convex hull $\CH(W\cup\{p\})$ and the segment $pq$ is external
  to this hull. Therefore there is a disk (in fact, a half-plane)
  containing $p$ and $q$ but covering no point from $W$, so they are
  adjacent in $\DG^{-}(P,W)$ by Observation~\ref{obs:shrunken-disk}.
  Assume now that $\measuredangle w_ipw_{i+1}<\pi$; then $pw_iw_{i+1}$
  is a triangle in $\DT(W\cup\{p\})$ whose circumscribing disk $D$
  covers no points from $W$.  If $q$ is exterior to $D$ then $w_i$
  and $w_{i+1}$ are neighbors in $\DT(W\cup\{p,q\})$ and $p$ and $q$
  cannot be adjacent in $\DG^{-}(P,W)$ because the segments $pq$ and
  $w_iw_{i+1}$ cross. If $q$ is interior to $D$ then
  $p$ and $q$ are adjacent in $\DG^{-}(P,W)$, by
  Observation~\ref{obs:shrunken-disk}.
\end{proof}

\begin{proposition}
  Let $P$ and $W$ be two point sets in the plane, and
  $n\mathop{:=}\max\{|P|,|W|\}$.  The witness Delaunay graph
  $\DG^{-}(P,W)$ can be computed in $O(n^2)$ time, which is worst-case
  optimal.
\end{proposition}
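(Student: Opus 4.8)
The plan is to prove the upper bound and lower bound separately. For the upper bound, I want to show $\DG^{-}(P,W)$ can be built in $O(n^2)$ time. The approach is to compute, for each vertex $p\in P$, all of its neighbors in the graph, using Lemma~\ref{lem:neighborsBySector}. The key observation is that the adjacencies of a fixed $p$ depend only on the Delaunay triangulation $\DT(W\cup\{p\})$ together with the radial positions of the other points of $P$ around $p$. So for each $p$ I would first compute $\DT(W\cup\{p\})$ in $O(|W|\log|W|)=O(n\log n)$ time, which determines the counterclockwise sequence $w_1,\dots,w_t$ of Delaunay neighbors of $p$ and hence a partition of the directions around $p$ into angular sectors. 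Then, for each of the remaining points $q\in P$, I would locate the sector containing $q$ and apply the dichotomy of Lemma~\ref{lem:neighborsBySector}: if the sector angle is at least $\pi$, declare $p\sim q$; otherwise test whether $q$ lies inside the circle through $p,w_i,w_{i+1}$, an $O(1)$ in-circle predicate.

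First I would bound the cost per vertex. After sorting the angular sectors of $p$ (available from the radial order of its Delaunay neighbors), locating each $q$ takes $O(\log n)$ by binary search, or $O(n)$ total by a single radial sort of the $P$-points around $p$; either way the work for one $p$ is dominated by the $O(n\log n)$ cost of the Delaunay computation. Summing over all $p\in P$ gives $O(n^2\log n)$, which is not yet the claimed bound. To shave the logarithmic factor I would note that the $n$ separate Delaunay triangulations need not be computed from scratch: after computing $\DT(W)$ once in $O(n\log n)$ time, inserting a single point $p$ and reading off its new neighbors can be done more cheaply, or alternatively one can afford $O(n)$ per vertex by a direct incremental/sweep argument, so that the total is $O(n^2)$.

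For the lower bound, I would exhibit an instance in which $\DG^{-}(P,W)$ has $\Omega(n^2)$ edges, forcing any algorithm that outputs the edge set to spend $\Omega(n^2)$ time simply writing the output. A natural candidate is to make $P$ a convex polygon (all points in convex position) with $W$ placed so that every pair of $P$-points remains adjacent; indeed, taking $W=\varnothing$ already yields the complete graph $K_{|P|}$ by the remark in the excerpt, and more robustly one can place the witnesses far away or cluster $P$ tightly relative to $W$ so that for every pair $p,q$ there is an empty disk through them. Since such a graph has $\binom{|P|}{2}=\Omega(n^2)$ edges, no algorithm can list them in $o(n^2)$ time, establishing worst-case optimality.

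The main obstacle I anticipate is the removal of the logarithmic factor in the upper bound: the naive per-vertex Delaunay recomputation gives $O(n^2\log n)$, and tightening this to $O(n^2)$ requires a more careful amortized argument—either reusing a single $\DT(W)$ across all insertions of points of $P$, or replacing the binary-search point location with a linear radial merge that exploits the already-sorted Delaunay neighbors of $p$. I expect the cleanest route is to compute $\DT(W)$ once, then for each $p$ determine its Delaunay neighbors in $\DT(W\cup\{p\})$ and process all $q\in P$ against $p$'s sectors in $O(n)$ time, so that the global running time is $\sum_{p}O(n)=O(n^2)$; verifying that this $O(n)$-per-vertex bound is genuinely achievable is the delicate point of the proof.
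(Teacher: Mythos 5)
Your overall architecture matches the paper's: compute $\DT(W)$ once, insert each $p$ to obtain $\DT(W\cup\{p\})$ in $O(n)$ additional time, and then classify every $q\in P\setminus\{p\}$ by which sector of Delaunay neighbors of $p$ it falls into, applying Lemma~\ref{lem:neighborsBySector} at $O(1)$ per decision; your lower bound ($W=\varnothing$ yields $K_{|P|}$, so the output alone can have quadratic size) is also exactly the paper's. However, there is a genuine gap at precisely the point you flag as ``delicate'': you never explain how to process all $q\in P$ against the sectors of $p$ in $O(n)$ time. A binary search per $q$ costs $O(\log n)$, giving $O(n\log n)$ per vertex and $O(n^2\log n)$ overall; a linear radial merge needs \emph{both} lists sorted around $p$ --- the Delaunay neighbors $w_1,\dots,w_t$ come radially sorted for free from the triangulation, but the points of $P\setminus\{p\}$ do not, and sorting them around each $p$ separately again costs $O(n\log n)$ per vertex. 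Reusing $\DT(W)$ across insertions, which you do propose, removes the logarithmic factor only from the Delaunay side, not from the point-location side.

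The missing ingredient is the classical result used by the paper (citing \cite{EOS86}): the radial orders of the points of $P\setminus\{p\}$ around \emph{every} $p\in P$ can be computed in $O(n^2)$ \emph{total} time, by constructing the arrangement of dual lines rather than performing $n$ independent sorts. With all radial orders precomputed in $O(n^2)$ overall, each vertex $p$ is then handled by a linear merge of two radially sorted sequences (the sectors of $\DT(W\cup\{p\})$ and the points of $P\setminus\{p\}$), so the per-vertex cost is genuinely $O(n)$ and the global $O(n^2)$ bound follows. Without this device, or an equivalent one, your algorithm as described runs in $O(n^2\log n)$, not $O(n^2)$.
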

\begin{proof}
The radial order of the points in $P\setminus\{p\}$ around
each point $p\in P$ can be obtained in overall time $O(n^2)$
\cite{EOS86}, and the Delaunay triangulation $\DT(W)$ can be
constructed in $O(n\log n)$ time \cite{Fo04}. Then for each point
$p\in P$ we can obtain $\DT(W\cup\{p\})$ in additional $O(n)$ time
and traverse the points of $P\setminus\{p\}$ in radial order
within the same time bound, deciding for each one whether it is a
neighbor of $p$ in $\DG^{-}(P,W)$  in constant time, thanks to
Lemma~\ref{lem:neighborsBySector}.
\end{proof}

Although the preceding algorithm is worst-case optimal because the
output may have quadratic size (recall that $\DG^{-}(P,\varnothing)$
is the complete graph $K_{|P|}$), it is interesting to have an
algorithm sensitive to the output size, even if it is more involved.
We show next how to accomplish this.

We first observe that the problem is not interesting if $|P|\leq 1$,
as there are no edges in the graph.  Similarly, if there are no
witnesses, the graph is complete.  In fact, if there is only one
witness, for any two vertices one of the two half-planes defined by
them does not cover a witness; so the graph is again complete.  Thus,
for the remainder of this discussion we assume that $|W|>1$ and
$|P|>1$.

Given the point sets $P$ and $W$, with $|P|>1$, $|W|>1$, denote by
$V(p)$ the (possibly unbounded polygonal) region of $p\in P$ in the
Voronoi diagram $\Vor(W\cup\{p\})$); note that we allow the
possibility that $p \in W$. Then, we have the following lemma:

\begin{lemma} \label{lem:pseudodisks} The convex polygons $V(p)$, $p
  \in P$ behave as pseudodisks.  More precisely, for distinct points
  $p,q \in P$, $V(p)$ and $V(q)$ are either disjoint or their
  boundaries $\bd V(p)$, $\bd V(q)$ either cross at most twice or overlap
  along a line segment.

  If $W \neq \{p, q\}$, $p \sim q$ in $\DG^{-}(P,W)$ if, and only if,
  $\bd V(p)$ and $\bd V(q)$ meet.  If $W = \{p,q\}$, $p \sim q$ by
  definition of $\DG^{-}(P,W)$.
\end{lemma}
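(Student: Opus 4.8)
The plan is to collapse the two-dimensional picture onto the perpendicular bisector $\ell$ of $p$ and $q$ and there exploit convexity. Writing $d_W(x):=\min_{w\in W}|x-w|$, the half-plane description of a Voronoi cell gives, in every case the lemma allows (including $p\in W$ or $q\in W$), the identity $V(p)=\{x:|x-p|\le d_W(x)\}$, a convex polygon, and likewise for $V(q)$. The first step is to observe that any common boundary point $x\in\bd V(p)\cap\bd V(q)$ is equidistant from $p$ and its nearest witness and from $q$ and its nearest witness, whence $|x-p|=|x-q|$ and $x\in\ell$. Thus the two boundaries can meet only along $\ell$, and the whole statement becomes one-dimensional.

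The crux is a convexity computation on $\ell$. Parametrize $\ell$ by arclength $t$, placing the midpoint $m$ of $pq$ at $t=0$ and writing $x(t)=m+tu$ for a unit vector $u$ along $\ell$; since $m-p\perp u$, we get $|x(t)-p|^2=|x(t)-q|^2=r^2+t^2$ with $r=|pq|/2$. For each $w$, $|x(t)-w|^2=t^2+b_wt+c_w$ has leading coefficient $1$ (for suitable constants $b_w,c_w$), so $d_W(x(t))^2=t^2+\min_w(b_wt+c_w)$ is $t^2$ plus a concave piecewise-linear function. Consequently $\phi(t):=(r^2+t^2)-d_W(x(t))^2=r^2-\min_w(b_wt+c_w)$ is a \emph{convex} piecewise-linear function of $t$. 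As all quantities compared are nonnegative, $|x(t)-p|\le d_W(x(t))\iff\phi(t)\le0$ and $|x(t)-p|=d_W(x(t))\iff\phi(t)=0$, so $V(p)\cap\ell=V(q)\cap\ell=\{t:\phi(t)\le0\}=:I$, and $I$ is a single interval because $\phi$ is convex.

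The pseudodisk alternative and the adjacency criterion both read off the interval $I$. The boundaries meet only on $I$; if $I=\varnothing$ the regions are disjoint or nested and their boundaries do not cross, if $I$ is a nondegenerate interval they cross at its (at most two) finite endpoints---the zeros of $\phi$---unless a whole edge of $V(p)$ or of $V(q)$ runs along $\ell$, which happens exactly when $q\in W$ or $p\in W$; and when both $p,q\in W$ both boundaries contain $I$ and overlap along it. This is the claimed trichotomy. For adjacency I would pass to $\Vor(W\cup\{p,q\})$, whose cells of $p$ and $q$ are $V(p)\cap H_{pq}$ and $V(q)\cap H_{qp}$ for the closed half-planes $H_{pq},H_{qp}$ bounded by $\ell$; since $H_{pq}\cap H_{qp}=\ell$, these cells share exactly $I$. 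By Lemma~\ref{lem:adjacency-condition}, $p\sim q$ in $\DG^-(P,W)$ iff $p,q$ are Delaunay neighbours in $\DT(W\cup\{p,q\})$, i.e.\ iff their cells share a one-dimensional edge, i.e.\ iff $I$ is nondegenerate, which is precisely when $\bd V(p)\cap\bd V(q)\ne\varnothing$. The case $W=\{p,q\}$, where the cells are complementary half-planes meeting along all of $\ell$, is adjacent by definition, as stated.

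The step I expect to cause the most trouble is the last equivalence, namely excluding a single-point contact: a priori $I$ could degenerate to one point where $\phi$ merely touches $0$ without crossing, so the boundaries meet yet the cells share no edge. This is exactly a configuration in which $p$, $q$, and a witness lie on a circle through a fourth cocircular point, and it is ruled out by the general-position hypothesis; the analogous degenerate possibility $I=\ell$ forces a witness collinear with $p,q$ and is likewise excluded. Making these exclusions precise is the heart of the matter. A secondary nuisance is the bookkeeping when $p$ or $q$ is itself a witness, where $d_W$ is realised by $p$ or $q$ and an edge of a cell lies along $\ell$; the identities above are written to remain valid verbatim in those cases, but they must be verified.
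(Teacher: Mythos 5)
Your proof is correct, but it takes a genuinely different route from the paper's, principally in the adjacency criterion. For the pseudodisk statement the two arguments are close: the paper also reduces to the perpendicular bisector and simply observes that a line meets the boundary of a convex polygon in at most two points or in one connected segment; your convex piecewise-linear $\phi$ and its sublevel interval $I$ are an algebraic repackaging of that observation (with the bonus that they handle the vertex-witness bookkeeping uniformly). For the equivalence $p\sim q \iff \bd V(p)\cap\bd V(q)\neq\emptyset$, however, the paper argues directly with disks and never passes through $\Vor(W\cup\{p,q\})$: given a witness-free disk with $p,q$ on its boundary, it moves the disk within the pencil of circles through $p$ and $q$ until the boundary first reaches a witness $w\in W\setminus\{p,q\}$ (no-three-collinear guarantees the pencil's interiors cover every such witness), and the center of that disk is a point of $\bd V(p)\cap\bd V(q)$; conversely, any $c\in\bd V(p)\cap\bd V(q)$ is the center of a witness-free disk of radius $d(c,p)=d(c,q)$, certifying the edge. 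Note that this converse yields an edge even from a single-point contact of the boundaries, so the paper never needs the no-four-concyclic hypothesis in this lemma; by contrast, your detour through Lemma~\ref{lem:adjacency-condition} equates adjacency with the cells of $\Vor(W\cup\{p,q\})$ sharing a one-dimensional edge, which forces you to spend that general-position assumption to exclude a singleton $I$ --- you correctly identify this as the crux of your version, and your resolution (a singleton $I$ needs a breakpoint of $\phi$ at its zero, hence two witnesses concyclic with $p$ and $q$) does go through. What your route buys is finer structure: the explicit interval $I$, the identification of the shared Voronoi edge, and a transparent treatment of the cases where $p$ or $q$ is itself a witness. One small slip: when $I=\emptyset$ the cells cannot be nested --- each contains its own site and misses $\ell$, hence lies strictly on its own side of $\ell$ --- so ``disjoint or nested'' should simply read ``disjoint'' (nesting would in fact contradict the trichotomy being proved).
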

\begin{proof}
  We assume that $W \neq \{ p,q \}$, since the lemma is vacuously true
  otherwise.  By definition, $p \sim q$ in $\DG^{-}(P,W)$ if and only
  if there exists a disk $D_{pq}$ not covering any witnesses, with
  $p,q \in \bd D_{pq}$.  Consider the set of all disks $D$ whose
  bounding circle contains $p$ and $q$; the union of the interiors of
  these disks cover the whole plane, except for a portion of the line
  $pq$.  Since there are other witnesses besides $p$ and $q$ and they
  are not allowed to lie on this line, due to our general position
  assumptions, there is a disk $D$ in this family whose boundary
  passes through $p$, $q$ and another witness $w \in W \setminus \{ p,
  q \}$ and such that $D$ does not cover any witnesses.  The center of
  the resulting disk $D$ is equidistant from $p$, $q$ and the witness
  $w \neq p,q$, and is no closer to any other witnesses.  Hence it is
  a point of $\bd V(p) \cap \bd V(q)$, as claimed.


  Conversely, suppose $c$ is a point of $\bd V(p) \cap \bd V(q)$; let
  $r \mathop{:=} d(c,p) (=d(c,q))$.  By definition of the Voronoi
  regions, the distance from $c$ to the closest witness is $r$.  Hence the
  disk $D_{pq}$ centered at $c$ of radius $r$ covers no witnesses and
  its boundary passes through $p$ and $q$, certifying that $p \sim q$
  in $\DG^{-}(P,W)$.

  The first part of the proof implies that, if $V(p)$ and $V(q)$ meet,
  then their boundaries meet.  To complete the proof of this lemma, it
  is enough to argue that the boundaries meet at most twice or overlap
  in a single segment.  But this is clear, since the intersection of
  $\bd V(p)$ and $\bd V(q)$ lies on the perpendicular bisector of
  $pq$, which is a straight line meeting the boundary of the convex
  polygon $V(p)$, if at all, either in at most two points (where the
  two boundaries properly cross) or overlapping the boundaries of both cells
  in a single connected segment.
\end{proof}

We now use hierarchical representation techniques introduced by
Dobkin and Kirkpatrick \cite{DK1,DK2,DK3}; the properties we need were
summarized by \cite{DHKS}, who refer to \cite{DK1,DK3,M} for the
proofs:

\begin{lemma}[Lemmas 5.2 and 5.3 in \cite{DHKS}]
  \label{lem:hierarchical}
  A three-dimensional polyhedron $R$ with a total of $n$ vertices,
  edges, and faces can be preprocessed in linear time into a data
  structure of linear size that supports the following operations in
  logarithmic time:
  \begin{enumerate}[(a)]
  \item given a directed line $\ell$, find its first point of
    intersection with $R$, and
  \item given a line $\ell$ translating (within in a plane) from
    infinity, find the first point of contact of $R$ and $\ell$.
  \end{enumerate}
\end{lemma}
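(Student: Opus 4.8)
The plan is to establish this via the Dobkin--Kirkpatrick hierarchy for the (convex) polyhedron $R$, which is exactly the structure underlying the cited references. First I would build a sequence of nested convex polyhedra $R = R_0 \supset R_1 \supset \dots \supset R_k$, where each $R_{i+1}$ is obtained from $R_i$ by deleting an independent set $I_i$ of low-degree vertices and taking the convex hull of those that remain. The construction rests on two combinatorial facts about the $1$-skeleton of a $3$-polytope: it is planar, so by Euler's formula its average vertex degree is below $6$ and hence at least a constant fraction of its vertices have degree bounded by some absolute constant $d$; and a graph of maximum degree $d$ contains an independent set of at least a $1/(d+1)$ fraction of its vertices. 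Taking $I_i$ to be such a set of bounded-degree, pairwise non-adjacent vertices, I would argue that (i)~each deleted vertex leaves a hole bounded by $O(1)$ edges that can be retriangulated in $O(1)$ time, (ii)~$R_{i+1} = \CH(V(R_i)\setminus I_i)$ is again convex, with $R_i\setminus R_{i+1}$ a disjoint union of constant-complexity caps, and (iii)~$|V(R_{i+1})| \le \alpha\,|V(R_i)|$ for a fixed $\alpha < 1$. Fact~(iii) forces $k = O(\log n)$ levels, and since each level is processed in time and space linear in its own size, the geometric decay yields linear total preprocessing time and linear total space for the whole hierarchy.

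Next I would describe both query algorithms as a single descent through the hierarchy, from the coarsest polytope $R_k$ to $R = R_0$, maintaining the answer for $R_{i+1}$ and refining it to the answer for $R_i$ in constant time per level. For operation~(a) the maintained quantity is the intersection $\ell \cap R_i$, which is empty or a segment; at the top level $R_k$ has $O(1)$ size, so $\ell \cap R_k$ is found by brute force. To pass from $R_{i+1}$ to $R_i = R_{i+1}\cup(\text{caps})$ I would use the \emph{locality} built into the construction: each cap sits over only $O(1)$ faces of $R_{i+1}$, and conversely each face of $R_{i+1}$ is covered by only $O(1)$ caps. If $\ell$ already meets $R_{i+1}$, then $\ell \cap R_i$ is obtained by extending the segment $\ell \cap R_{i+1}$ at each of its two endpoints into the caps incident to the face carrying that endpoint, an $O(1)$ test.

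The delicate case---and the step I expect to be the main obstacle---is when $\ell$ misses $R_{i+1}$ entirely yet meets the finer polytope $R_i$; then all of $\ell \cap R_i$ lies inside the caps and there is no endpoint of a previous intersection to guide the search. To handle it I would carry along a witness of disjointness, namely the feature of $R_{i+1}$ nearest $\ell$ (equivalently a separating plane), and argue that any cap $\ell$ could penetrate must be incident to this nearest feature; since only $O(1)$ caps are incident there, the update is again $O(1)$. Verifying this locality claim---that a line disjoint from the coarse polytope can only enter the fine polytope through caps adjacent to the coarse polytope's nearest feature---is the crux, and it is precisely what the bounded-degree, independent-set construction is designed to guarantee.

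Operation~(b) I would treat by the same scheme: the translating line first contacts $R$ at the point of $R$ extreme in the translation direction within the sweeping plane, an extremal (linear-programming-type) query whose answer for $R_i$ can likewise be recomputed in $O(1)$ from its answer for $R_{i+1}$ by inspecting the $O(1)$ caps adjacent to the current extreme feature. Summing the constant work per level over the $O(\log n)$ levels gives the claimed logarithmic query time, completing the proof.
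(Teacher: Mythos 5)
First, a point of reference: the paper does not prove this lemma at all---it is imported verbatim from Dobkin--Hershberger--Kirkpatrick--Suri \cite{DHKS} (their Lemmas 5.2 and 5.3), who in turn refer to \cite{DK1,DK3,M} for proofs. So your proposal is measured against the classical Dobkin--Kirkpatrick hierarchy argument, which is indeed the approach you chose. Your construction of the hierarchy (independent set of bounded-degree vertices, geometric decay, $O(\log n)$ depth, linear total size, constant-complexity caps) is correct and standard, and your update for the case where $\ell$ already meets $R_{i+1}$ is sound: there the witness is a \emph{face} of $R_{i+1}$, and each face of $R_{i+1}$ is covered by at most one cap, so the extension step really is $O(1)$.

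The gap is exactly at the step you flag as the crux, and your proposed resolution fails on both counts. (1)~The cap penetrated by $\ell$ need not be incident to the feature of $R_{i+1}$ nearest to $\ell$: take $R_{i+1}$ to be a long box with one cap protruding to height $h$ above the top face near one end, and let $\ell$ stab that cap while skimming at height $\varepsilon \ll h$ over the opposite end of the top face; the nearest feature then lies at the far end and touches no cap. (2)~Even when incidence holds, a feature of $R_{i+1}$ can meet $\Theta(n)$ caps, not $O(1)$: the construction bounds the degree of \emph{removed} vertices only, while a surviving vertex may have arbitrarily many removed neighbors. Concretely, if $R_i$ is a pyramid over an $m$-gon and one removes every other rim vertex, each of the $m/2$ caps is a tetrahedron having the apex as a vertex; a line passing at distance $\varepsilon$ above the apex and dipping into one rim cap has the apex as its nearest feature, so your update would inspect $\Theta(m)$ caps at a single level. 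The same objection hits your treatment of operation~(b), whose ``current extreme feature'' is a vertex. This is precisely why the published proofs are more delicate: they anchor the disjointness witness to faces (via tangent or separating planes through $\ell$), exploit the structural lemma that at most one removed vertex of $R_i$ lies strictly beyond any plane having $R_{i+1}$ on one side (independence plus connectivity of the subgraph of vertices beyond such a plane), and handle tangency/extremal steps with additional machinery such as the polar hierarchy---none of which is recovered by ``inspect the caps at the nearest feature.''
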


\begin{theorem} \label{thm:sweepingGhosts} Let $P$ and $W$ be two point sets in the plane, and $n
  \mathop{:=} \max\{|P|,|W|\}$.  The witness Delaunay graph
  $\DG^{-}(P,W)$ can be computed in time $O(k \log n + n \log^2 n)$,
  where $k$ is the number of edges in the graph.
\end{theorem}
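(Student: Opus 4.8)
The plan is to compute the graph by finding, for each vertex $p \in P$, all the vertices $q$ adjacent to it, exploiting the pseudodisk characterization from Lemma~\ref{lem:pseudodisks}: $p \sim q$ (when $W \neq \{p,q\}$) exactly when the boundaries $\bd V(p)$ and $\bd V(q)$ meet. The overall strategy should be output-sensitive, so I would aim to spend roughly $O(\log n)$ time per reported edge plus an $O(n \log^2 n)$ overhead, and the natural way to achieve the latter is through the Dobkin--Kirkpatrick hierarchical representation of Lemma~\ref{lem:hierarchical} applied to a suitable lifting of the Voronoi cells into three dimensions.

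First I would lift everything to the standard paraboloid. Recall that the Voronoi region $V(p)$ of $p$ in $\Vor(W \cup \{p\})$ is the vertical projection of a face of the lower envelope of the planes dual to the points of $W \cup \{p\}$; equivalently, $V(p)$ is determined by the plane $h_p$ dual to $p$ together with the lower envelope $\E$ of the planes dual to $W$. The key observation is that $V(p)$ is exactly the projection of the portion of $h_p$ that lies below $\E$, and adding the single point $p$ to the fixed witness set $W$ amounts to intersecting the fixed convex polyhedron (the region below $\E$) with one extra halfspace. I would therefore preprocess the polyhedron $R$ bounding the region below $\E$ once, in $O(n \log n)$ time (to build $\E$) plus linear time for the hierarchy, and then treat each query vertex $p$ as a ray/translating-line problem against $R$ so that $V(p)$ and its boundary can be accessed implicitly in logarithmic time per operation.

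The main work is then the adjacency detection. To decide whether $\bd V(p)$ meets $\bd V(q)$, I would use the fact from Lemma~\ref{lem:pseudodisks} that any such meeting lies on the perpendicular bisector of $pq$ and that the cells behave as pseudodisks, so their boundaries cross at most twice. Rather than testing all $\binom{|P|}{2}$ pairs, I would walk the arrangement of the cell boundaries: starting from a known adjacency (or from the Delaunay triangulation $\DT(W)$, built in $O(n\log n)$ time), I would trace each boundary $\bd V(p)$ through the diagram, using the hierarchical ray-shooting queries of Lemma~\ref{lem:hierarchical} to locate in $O(\log n)$ time the next cell boundary it enters. Each crossing discovered corresponds via Lemma~\ref{lem:pseudodisks} to an edge of $\DG^{-}(P,W)$, and by the pseudodisk property each ordered pair contributes only a bounded number of crossings, so the total number of hierarchy queries is $O(k + n)$, each costing $O(\log n)$; the extra $O(n\log^2 n)$ term absorbs the per-vertex setup (locating $p$ in the structure and initializing its cell) across all $n$ vertices.

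The hard part will be organizing the boundary traversal so that no edge is charged more than $O(1)$ queries and no pair is examined unnecessarily, i.e., converting the static pseudodisk intersection structure into an efficient sweep that genuinely reports only the $k$ existing edges rather than probing all pairs. In particular I expect the delicate point to be the correct handling of the cases in Lemma~\ref{lem:pseudodisks} where boundaries \emph{overlap} along a segment (the non-generic contacts) and the unbounded cells arising when $p$ lies on the convex hull of $W \cup \{p\}$, as flagged by Lemma~\ref{lem:neighborsBySector}; these must be detected by the translating-line operation~(b) of Lemma~\ref{lem:hierarchical} rather than the ray-shooting operation~(a), and they must be counted consistently so that the $O(\log n)$-per-edge bound is preserved.
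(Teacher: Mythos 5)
You assemble the same ingredients as the paper: Lemma~\ref{lem:pseudodisks} reduces the problem to computing the intersection graph of the implicit curves $\{\bd V(p) \mid p \in P\}$, the lifting to the paraboloid turns each cell $V(p)$ into the projection of $p^* \cap C(W)$ for a fixed convex polyhedron $C(W)$ built from the witnesses, and the Dobkin--Kirkpatrick structure of Lemma~\ref{lem:hierarchical} gives the $O(\log n)$ primitives: pairwise intersection of $\bd V(p)$ and $\bd V(q)$ by shooting along the lifted bisector $p^* \cap q^*$, and extreme points of a cell by the translating-line query. Up to this point your proposal and the paper coincide.

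The gap is in the step that actually makes the algorithm output-sensitive. You propose to trace each $\bd V(p)$ through the arrangement, ``using the hierarchical ray-shooting queries of Lemma~\ref{lem:hierarchical} to locate in $O(\log n)$ time the next cell boundary it enters.'' No structure you have built supports that query. The hierarchy is over $C(W)$, a polyhedron determined by the witnesses alone; it answers questions about one line versus $C(W)$, hence about one cell, or about one \emph{specified} pair of cells (via their bisector). ``The next boundary entered while walking along $\bd V(p)$'' is a query against the entire family $\{\bd V(q) \mid q \in P \setminus \{p\}\}$ of $n-1$ implicit curves, and answering it with the available primitives forces you to test candidate partners $q$ one by one---which collapses back to $\Theta(|P|^2)$ pairwise tests and loses the claimed bound. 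You yourself flag exactly this (``converting the static pseudodisk intersection structure into an efficient sweep \ldots rather than probing all pairs'') as the hard part, but that hard part \emph{is} the proof. The paper resolves it by running a standard Bentley--Ottmann plane sweep on the implicit curves: the sweepline status structure maintains the vertical order of the curves currently met by the sweepline, and only pairs that become \emph{adjacent in that order} are ever tested for intersection---there are only $O(n+k)$ such events---while inserting a newly appearing curve costs $O(\log n)$ point-location probes of the form ``where does this point lie relative to $V(p) \cap \ell$,'' each implemented by one hierarchy query, which is the source of the $O(n \log^2 n)$ term. In short, candidate-pair generation must come from the sweepline invariant, not from a nonexistent next-crossing oracle; without that substitution your argument does not establish the theorem.
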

\begin{proof}
  As already mentioned, we will assume that there are at least two
  vertices and at least two witnesses.  First suppose that no vertex is
  a witness point; we explain how to remove this assumption below.

  By Lemma~\ref{lem:pseudodisks}, the graph $\DG^{-}(P,W)$ is
  isomorphic to the intersection graph of the set of curves $\{\bd
  V(p) \mid p \in P \}$.  Since any pair of curves cross at most twice
  (or overlap along a segment), it is sufficient to compute their
  arrangement and identify all vertices; a \emph{vertex} is a point of
  crossing of two curves or an endpoint of a segment of overlap.  We
  compute the arrangement by implementing a plane sweep from left to
  right~\cite{4Ms}, \emph{without} representing the curves explicitly,
  since their worst-case combined complexity is easily seen to be
  $\Theta(n^2)$.  We need the following operations:
  \begin{enumerate}[(i)]
  \item For a given $p\in P$, determine the leftmost and rightmost point of
    $\bd V(p)$. [Needed $n$ times, once per $p$.]
  \item For a given pair $p,q \in P$, determine the intersection points of
    $\bd V(p)$ and $\bd V(q)$, or confirm that the curves do not
    meet.  [Needed $O(n+k)$ times, once for every pair of curves
    adjacent along the sweepline.]
  \item For a given point $t(x,y)$ on a vertical line $\ell$, and a
    given $p\in P$, such that $\bd V(p)$ meets $\ell$, determine
    whether $t$ lies in $V(p) \cap \ell$ and, if not, on what side of
    this intersection along the line.  [Needed $O(n \log n)$ times,
    $O(\log n)$ times for each insertion of a new object into the data
    structure maintained by the sweepline; the point $t$ is always the
    leftmost point of a newly discovered region.]
  \end{enumerate}

  With the above three operations in hand, one can carry out a
  standard line sweep, sweeping a plane by a vertical line, say
  left-to-right, detecting appearances, intersections, and
  disappearances of curves and maintaining the order of their
  intersections with the line \emph{without} explicitly computing the
  curves.  It remains to describe how to implement each of the above
  operations to run in logarithmic time.  The claimed running time
  bounds follow.

  Recall the following standard \emph{lifting transformation}: We
  transform a point $p(a,b)\in\reals^2$ to the plane $p^* \colon
  z=2ax+2by-a^2-b^2$ tangent to the \emph{standard paraboloid}
  $z=x^2+y^2$ in $\reals^3$.  The transformation has the following
  property: Given a set $Q$ of points in the plane, consider the set
  $C(Q)$ of all points in space lying on or above the planes of
  $Q^*=\{q^* \mid q\in Q \}$.  This set is an unbounded convex
  polyhedral region whose boundary is a convex monotone surface
  $\pi=\bd C(Q)$.  The surface consists of convex portions
  (\emph{faces}) of the planes of $Q^*$.  The vertical projection of
  $\pi$ to the plane coincides with the Voronoi diagram $\Vor(Q)$ and
  the faces project precisely to Voronoi regions \cite{ES86}.

  We compute and store the polyhedron $C=C(W)$ in a data structure
  supporting operations (a)~and~(b) from Lemma~\ref{lem:hierarchical}.
  We translate the operations~(i)--(iii) to operations on $C$.
  Operation~(iii) involves determining, given $t$, $\ell$, and $p$,
  the location of $t$ along $\ell$, in relation to $\ell \cap V(p)$.
  This can be accomplished in constant time, once we compute $\ell
  \cap V(p)$.  ``Lifting'' the picture to three dimensions, consider
  the line $\ell' \subset p^*$ that projects vertically to $\ell$.
  The desired intersection corresponds to $\ell' \cap C$ in
  $\reals^3$.  This set, in turn, can be computed in $O(\log n)$ time
  by shooting along $\ell'$ in both direction, using
  Lemma~\ref{lem:hierarchical}(a).

  Operation~(ii) again reduces to shooting along a line.  The points
  of $\bd V(p) \cap \bd V(q)$ lie on the bisector $b=b(p,q)$ and their
  corresponding three-dimensional points (i.e., the points of
  intersection of $p^* \cap \pi$ and $q^* \cap \pi$) lie on the line
  $b'=p^* \cap q^*$ that projects vertically to $b$.  Indeed the
  lifted points in question are just the set $b' \cap \pi$ and can be
  computed by two directed-line-shooting queries along $b'$, via
  Lemma~\ref{lem:hierarchical}(a).

  Finally, operation~(i) calls for finding the leftmost (i.e.,
  $x$-minimum) point of $V(p)$; the rightmost point is handled
  similarly.  This point is the projection of the $x$-minimum point of
  $p^* \cap C$ to $xy$-plane.  The latter point is the first point of
  contact of the line $p^* \cap \{ x=c \}$ with $C$, as $c$ varies
  from $-\infty$ to $+\infty$, and so can be identified in logarithmic
  time, by Lemma~\ref{lem:hierarchical}(b).  (It is also possible that
  $V(p)$ does not have a leftmost point---in this case we want to
  compute the infinite ray (or two) bounding $V(p)$ (or $p^* \cap C$
  in three dimensions) and extending to infinity to the left; this is
  needed for properly initializing the state of the sweepline ``at
  infinity.''  This can be done by preprocessing the intersection of
  $C$ with ``the plane'' $x=+\infty$ for line intersection queries and
  intersecting it with $p^*$; being a two-dimensional problem, it is
  easier.)  This concludes our description of the implementation of
  operations~(i)--(iii).

  What modifications are needed if some vertices are also witnesses?
  For such a vertex $p$, $V(p)$ coincides with Voronoi region of $p$
  in $\Vor(W)$; its lifted version is a facet of $C$, which in turn is
  precisely $p^* \cap C$.  Hence the algorithm works as advertised,
  with the additional proviso that the data structure needs to handle
  the possibility that query lines and/or planes might be supporting
  lines/planes of $C$.
\end{proof}

The idea of computing or detecting intersections among a set of
objects by a sweepline algorithm, without explicitly computing the
objects is not new; see, for example, \cite{AAK06,AAS02}.

\begin{figure}[htbp!]
  \centering
  \includegraphics{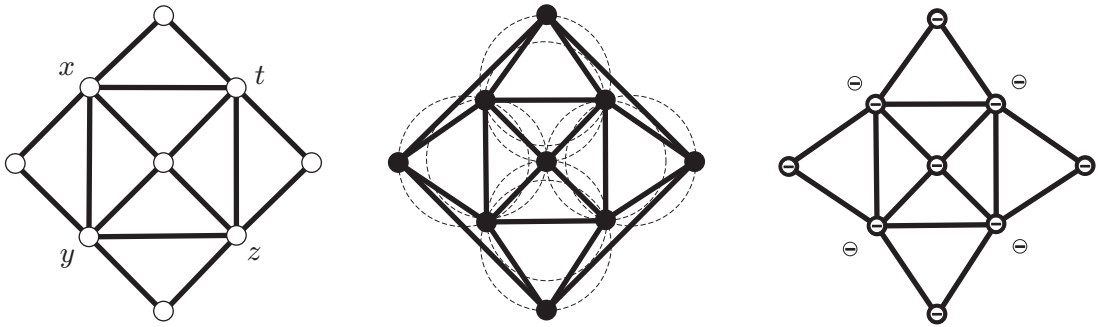}
  \caption{The graph $G$ on the left is not 1-tough, because the
    removal of vertices $x$, $y$, $z$ and $t$ yields five components.
    In the center a supergraph of $G$ is realized as a Delaunay graph;
    from this a witness Delaunay graph realization of $G$ is obtained
    by placing witnesses at the vertices plus four extra witnesses that
    force the removal of the convex hull edges.}
  \label{fig:DelaunayDrawings}
\end{figure}

The characterization of combinatorial graphs that are drawable as
standard Delaunay graphs is a long-standing open problem (see Section
7.3.2 in \cite{Li08}).  Recall that every graph that is realizable as
a Delaunay graph $\DG(P)$ is also a witness Delaunay graph, because
$\DG^{-}(P,P)=\DG(P)$.  In particular, all maximal outerplanar graphs
are realizable, as proved by Dillencourt in \cite{dillencourt2,
  dillencourt3} (better algorithms were later described in
\cite{S94,L97}).

Dillencourt also proved that every Delaunay graph must satisfy
some necessary conditions \cite{dillencourt1}, in particular that
they are always 1-tough (the deletion of any $k$ vertices cannot
produce a graph with more than $k$ components) which he used to
construct some graphs that are not drawable as Delaunay graphs.
For example the graph in Figure~\ref{fig:DelaunayDrawings}, left,
does not admit such a realization because it is not 1-tough.
However it can be realized as a witness Delaunay graph, as shown
in the figure.


Substantial effort has been devoted to drawing trees as proximity
graphs \cite{Li08,BDLL95,BLL96,JLM95,LLMW98,LM03,MS92}.  We prove
next that drawing a tree as witness Delaunay graph is always
possible.

\begin{theorem} 
  Every tree can be realized as witness Delaunay graph $\DG^{-}(P,W)$
  for suitable point sets $P$ and $W$.  The realization can be carried
  out in time linear in the size of the tree, in
  infinite-precision-arithmetic model of computation.
\end{theorem}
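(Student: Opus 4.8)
The plan is to exploit the monotonicity of the construction in the witness set together with the identity $\DG^{-}(P,P)=\DG(P)$. Since adding witnesses can only destroy edges (an open disk empty of a larger witness set is empty of a smaller one, so $\DG^{-}(P,W')\subseteq\DG^{-}(P,W)$ whenever $W\subseteq W'$), I would first realize a \emph{supergraph} of the tree as an ordinary Delaunay graph and then switch off the superfluous edges one at a time by inserting a constant number of extra witnesses per edge. Concretely, the first step is to place the vertices $P$ so that every edge of the given tree $T$ is a Delaunay edge of $P$; equivalently, to produce a straight-line drawing of $T$ all of whose edges belong to $\DT(P)$. This can be done in linear time by a recursive, scale-separated layout: draw each subtree inside a tiny disk placed far from its siblings, so that the diameter disk of every parent--child segment is empty of all other vertices and the segment is therefore a Delaunay edge of $P$. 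With $W:=P$ we then have, by Lemma~\ref{lem:adjacency-condition}, $\DG^{-}(P,P)=\DG(P)=\DT(P)\supseteq T$.

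The second step is the edge-killing gadget. To remove a single unwanted Delaunay edge $pr$ I would place two witnesses $w^{+}$ and $w^{-}$ very close to the midpoint of $pr$, one just off each side of the line through $p$ and $r$. Every circle through $p$ and $r$ has its centre on the perpendicular bisector of $pr$ and therefore bounds a disk that bulges to one side of that line; a short computation (placing $p,r$ at $(\pm1,0)$ and the centre at $(0,h)$) shows that such a disk always contains the near-midpoint witness lying on the side it bulges toward. Hence \emph{every} disk whose boundary passes through $p$ and $r$ covers at least one of $w^{+},w^{-}$, so by definition $p\not\sim r$ in the augmented graph. Doing this for each of the $O(n)$ edges of $\DT(P)\setminus T$ uses $O(n)$ witnesses in total, keeping the whole construction linear.

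The main obstacle is \emph{interference}: I must be sure that the witnesses introduced to kill the non-tree edges do not also kill the tree edges I want to keep. By the monotonicity above it suffices to exhibit, for each edge $p'q'$ of $T$, a single open disk through $p'$ and $q'$ that avoids \emph{all} the killing witnesses. For a Delaunay edge the empty disks form the lens cut out by the two adjacent circumdisks, while each killing witness sits within distance $\varepsilon$ of the midpoint of some \emph{other} edge $pr$. I would therefore strengthen the layout of the first step so that the drawing is ``well separated'', i.e.\ no edge midpoint lies in the Delaunay lens of another edge; this is easy to guarantee in the recursive, scale-separated construction, since distinct subtrees and distinct edges live at geometrically separated scales. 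Choosing all the displacements $\varepsilon$ small enough then keeps every killing witness outside the certifying lens of every tree edge, so all tree edges survive while every superfluous edge is destroyed. A final generic perturbation restores the general-position hypothesis on $P\cup W$, and the graph realized is exactly $T$; since each step runs in linear time, so does the whole realization.
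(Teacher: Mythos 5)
Your high-level plan---realize a Delaunay supergraph of $T$ and then use the monotonicity $W\subseteq W' \Rightarrow \DG^{-}(P,W')\subseteq\DG^{-}(P,W)$ to switch off unwanted edges with extra witnesses---is sound in spirit (it is how the paper handles the example in Figure~\ref{fig:DelaunayDrawings}), and your two-witness midpoint gadget does kill any single targeted edge. The genuine gap is in the interference step, which you rightly call the main obstacle but then dispose of with a condition that is geometrically impossible to satisfy. You require the layout to be ``well separated'' in the sense that no edge midpoint lies in the Delaunay lens of another edge. But whenever a non-tree edge of $\DT(P)$ shares a Delaunay triangle with a tree edge---say triangle $p'q'r$ with tree edge $p'q'$ and non-tree edge $p'r$---the midpoint of $p'r$ is the midpoint of a chord of the circumcircle of that triangle, hence lies \emph{strictly inside} the circumdisk, which is one of the two disks whose union is the lens of $p'q'$. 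And such pairs always exist: every tree edge of $\DT(P)$ is incident to some Delaunay triangle, and no triangle can have all three edges in $T$ (that would be a cycle), so one of its other edges is a non-tree edge. Thus no placement of $P$, scale-separated or not, satisfies your separation condition, and the correctness argument collapses exactly at its crucial point.

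This is not merely a defect of the sufficient condition; the gadget itself can destroy tree edges. Take a vertex $v=(0,0)$ with children $c_{i-1},c_i,c_{i+1}$ at distance $1$ at angles $-\theta,0,\theta$---the natural local picture around a parent. The Delaunay edges $c_{i-1}c_i$ and $c_ic_{i+1}$ must be killed. Disks through $v$ and $c_i=(1,0)$ have centers $(\tfrac12,s)$, and a point $(x,y)$ lies inside such a disk iff $x^2-x+y^2<2sy$. The midpoint $m$ of $c_ic_{i+1}$ satisfies $x^2-x+y^2=0$ exactly (it lies on the Gabriel circle of $vc_i$), so your witness displaced from $m$ toward $v$ has $x^2-x+y^2<0$ and $y>0$; avoiding it forces $s<0$. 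Symmetrically, the inner witness for $c_{i-1}c_i$ has $x^2-x+y^2<0$ and $y<0$, forcing $s>0$. Hence no disk through $v$ and $c_i$ avoids both, and the tree edge $vc_i$ is erased along with the unwanted ones. A repair needs an asymmetric placement (e.g., a single witness in the lune of the unwanted edge on the side away from $v$, kept outside the Gabriel disks of the adjacent tree edges) plus a global argument that all such placements are simultaneously compatible---and that bookkeeping is precisely where the difficulty lives. The paper avoids the issue with a different construction altogether: an induction on the height of the tree, building the drawing in nested boxes with a witness at every vertex, triples of witnesses between sibling boxes and witnesses at box corners blocking unwanted adjacencies, and explicit disks certifying the surviving edges, with no reference to $\DT(P)$ at all.
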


\begin{proof} We show that every tree $T=(V,E)$, rooted at a vertex
  $r$, can be drawn as $\DG^{-}(P,W)$, rooted at a given point
  $s$, in such a way that:
  \begin{enumerate}[(a)]
  \item a witness is placed at each vertex, i.e., $P\subseteq W$;
  \item all the vertices $P$ except for the root $s$ are in the
    interior of an axis-parallel square box $B$, and $s$ lies at the
    midpoint of the top side of $B$;
  \item there are disks $D_{ab}$ incident to the endpoints $a, b
    \in P$ of each edge $ab$ corresponding to edges of $T$, empty
    of witnesses and vertices, that certify the Delaunay edges; these
    disks $D_{ab}$ lie inside $B$, except for some disks $D_{sa}$,
    incident to $s$; the disks $D_{sa}$ have their centers inside
    $B$, and can only cross the top side of $B$;
  \item two witnesses are placed at the top-left and the top-right
    corners of $B$.
  \end{enumerate}


  The proof is by induction on the height $h$ of the tree.  For $h =
  0$, it is obvious as there is only one vertex and no edges; let $B$
  be an arbitrary square with $s$ at the midpoint of its top side.
  Assume this is true for trees with heights up to $k$, $k\geq 0$, and
  let $T = (V,E)$ be a rooted tree of height $k+1$.  Subtrees $T_1 =
  (V_1,E_1), \ldots, T_m = (V_m,E_m)$ of the root have height at
  most $k$, and can be drawn as claimed, in boxes $B_1, \ldots, B_m$,
  by inductive assumption.  By rescaling the boxes, if necessary,
  assume that each has side length $1$.  Place the boxes on a
  horizontal line, in order, $\frac{1}{2}$ apart; refer to
  figure~\ref{WDGTree2}.
  Draw an axis-parallel rectangle $R$ with
  width $\frac{2m-1}{2}$, height $10$ times its width, lower left
  corner at the upper left corner of $B_1$, and lower right corner at
  the upper right corner of $B_m$.  Place $s$ in the middle of the top
  side of $R$. Put three witnesses midway between consecutive boxes
  $B_i$ and $B_{i+1}$, $1 \leq i \leq m-1$, one aligned with the top
  of $B_i$, one with the bottom of $B_i$, and one midway between them.  For $i
  = 1,\cdots,m$, consider a disk $D_{s c_i}$ such that its boundary
  contains $s$ and $c_i$, the root of $T_i$, and such that it is
  tangent to $B_i$. We construct an axis-parallel square box $B$ with
  its upper midpoint at $s$, containing the smaller boxes $B_1$,
  \ldots, $B_m$ and as narrow as possible but yet such that the disks
  $D_{sc_i}$ intersect only its top side. We add a witness at $s$ and
  two witnesses at the two upper corners of the new box $B$ (see
  Figure~\ref{WDGTree2}).
  Notice that the construction creates some collinearities.
  They can be easily removed by slightly perturbing the positions of
  the vertices and witnesses without changing the tree.

  \begin{figure}[htbp!]
    \centering
    \includegraphics{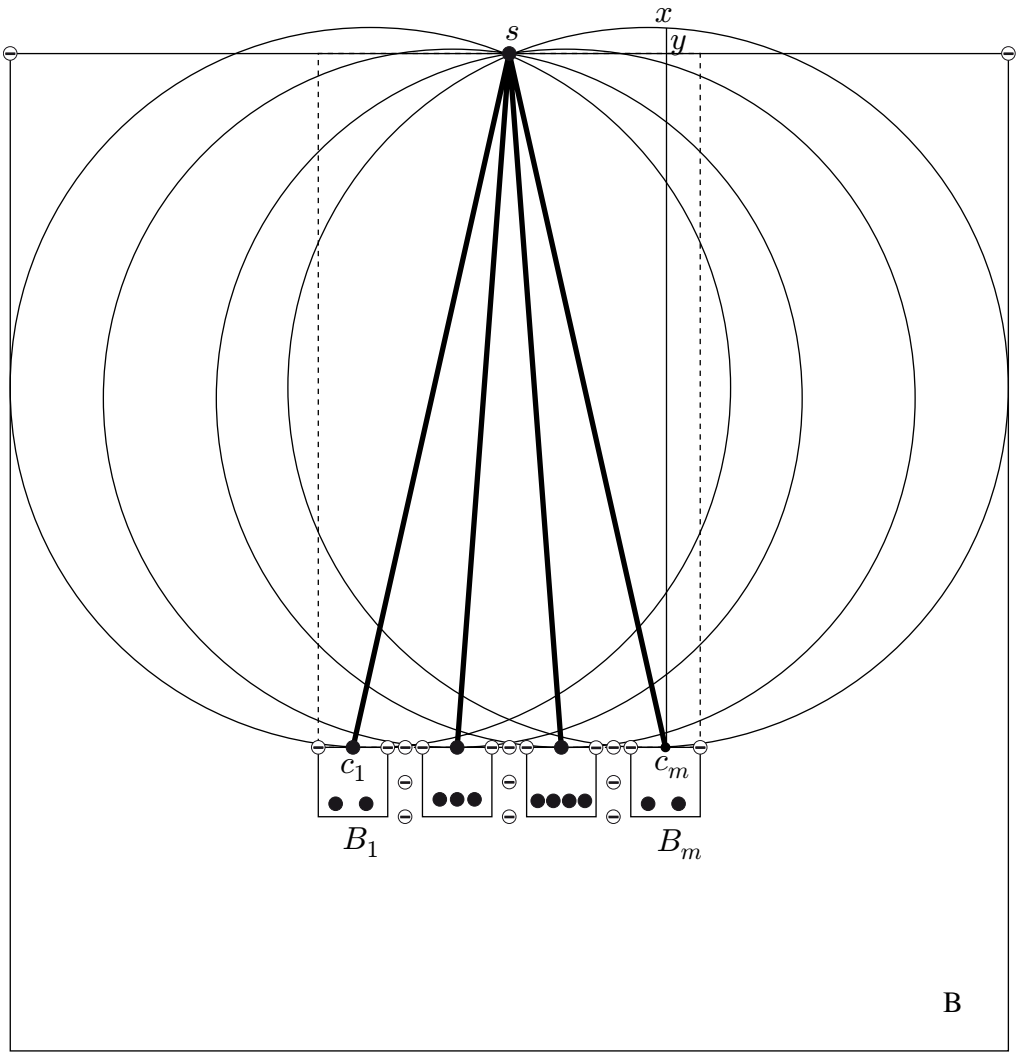}
    \caption{The black points are the vertices and the white points,
      the witnesses.  Rectangle $R$ is dashed.} \label{WDGTree2}
  \end{figure}

  To confirm that the construction indeed realizes the tree $T$, we
  first prove a technical assertion: We claim that the sub-boxes $B_1,
  \ldots, B_m$ lie in the lower half of the box $B$.  We prove this by
  induction, for which the base case is vacuously true.  We calculate
  first the side length of $B$.  Let $r$ be the width of $R$.  Let $x
  c_m$ be the diameter of $D_{s c_m}$ incident to $c_m$; refer to
  Figure~\ref{WDGTree2}.  Let $y$ be
  the intersection of $x c_m$ and the horizontal line through $s$.
  We obtain two congruent triangles $\triangle x y s$ and $\triangle s
  y c_m$.  The distance $x y$ is given by $\frac{(r/2 - 1/2)^2}{10
    r}$.  Hence the radius of $D_{s c_m}$ is $\frac{1}{2} \times
  (\frac{(r/2 - 1 /2)^2}{10 r} + 10 r)$. Therefore the width of $B$ is
  $\frac{(r /2 - 1 /2)^2}{10 r} + 10 r + r - 1 $ as the
  centers of $D_{sc_1}$ and $D_{c_ms}$ are $r - 1$ apart.  Now it is
  sufficient to prove that $ \frac{(r /2 - 1 /2)^2}{10 r} + r - 1 < 10
  r$ to show that the smaller boxes $B_1, \cdots, B_m$ are in the
  lower half of $B$.  As $r \geq 1$ and $\frac{(r/2-1/2)^2}{10r} <
  \frac{r}{40}$, we obtain that $ \frac{(r /2 - 1 /2)^2}{10 r} + r - 1
  < 2 r$ and the claim follows.


  Now we will check that the witness Delaunay graph of the set of
  vertices and witnesses described above is precisely $T$. Conditions
  (a)~to~(d) are clearly fulfilled by construction. By the inductive
  hypothesis, we obtain $T_1, T_2, \ldots, T_m$ as the witness
  Delaunay graphs of the constructions inside boxes $B_1, \ldots,
  B_m$, respectively.  By construction, the witnesses we placed
  outside $B_i$ do not interfere with the Gabriel disks of the edges
  connecting two vertices within $B_i$.  There are no edges $v_i \in
  B_i$, and $v_j \in B_j$, $i \neq j$, because the triples of
  witnesses between the smaller boxes prevent that.  More precisely,
  we know that the edge $v_i v_j$ will cross the segment $w_1 w_2$
  defined by two witnesses $w_1$ and $w_2$, vertically aligned at a
  vertical distance of $\frac{1}{4}$, lying between $B_i$ and
  $B_j$.  If we draw the disk $D_{w_1 w_2}$ with diameter $w_1 w_2$,
  it is empty of vertices by construction. As the edge $v_i v_j$
  crosses it and $v_i$ and $v_j$ are outside $D_{w_1 w_2}$, any disk
  containing $v_i v_j$ contains a witness, for example, by
  Lemma~\ref{lem:adjacency-condition}.

  The roots of $T_1$, $\ldots$, $T_m$ are adjacent to $s$ by
  construction.  It remains to check that $s$ is not adjacent to any
  vertex interior to any of the boxes $B_i$. This is prevented by the
  three witnesses on the top edge of $B_i$.  More precisely, let
  $w_1$, $w_2$, $w_3$, be the three witnesses on the top side of
  $B_i$. We consider the two vertices $s$ and $v$, with $v$ being a
  vertex inside the box $B_i$.  A putative edge $sv$ must cross either the
  segment $w_1 w_2$ or the segment $w_2 w_3$.  Suppose that it
  crosses the segment $w_1 w_2$.  Recall that all the interior
  vertices of $B_i$ are in its lower half, hence the disk $D_{w_1
    w_2}$ with diameter $w_1 w_2$ is empty of vertices. Therefore any
  Delaunay disk $D_{sv}$ must contain either $w_1$ or $w_2$, or both,
  and $sv$ is not an edge of the witness Delaunay graph.
\end{proof}

We note that it might be interesting to investigate how large a grid
one needs to draw a tree as a witness Gabriel graph if the vertices
and witnesses are to be placed at points with integer coordinates.
The above construction made no effort to optimize this quantity.

We conclude this section with a result on the negative side:

\begin{theorem}
  A non-planar bipartite graph cannot be realized as witness Delaunay
  graph $\DG^{-}(P,W)$, for any point sets $P$ and $W$.
\end{theorem}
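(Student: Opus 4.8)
The plan is to convert the statement into a question about pseudodisk intersection graphs via Lemma~\ref{lem:pseudodisks}. Recall that $\DG^{-}(P,W)$ is isomorphic to the intersection graph of the convex regions $\{V(p)\mid p\in P\}$, where $p\sim q$ exactly when $V(p)$ and $V(q)$ meet, and the boundaries of any two of these regions cross at most twice (or overlap along a segment). Suppose $G=\DG^{-}(P,W)$ is bipartite with color classes $A$ and $B$. Since $G$ has no edge inside $A$, the regions $\{V(p)\mid p\in A\}$ are pairwise disjoint, and likewise $\{V(q)\mid q\in B\}$ are pairwise disjoint. Thus $G$ is the bipartite intersection graph of two families of pairwise-disjoint pseudodisks, and it suffices to prove that any such graph is planar. (The boundary-overlap case is removed by a small perturbation; the special pair $W=\{p,q\}$ and the cases $|W|\le 1$ only yield trivially planar or complete graphs, so I set them aside; unbounded Voronoi regions are handled by running the argument on the sphere $S^2$, on which every $V(p)$ is a closed topological disk.)

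Next I would exhibit an explicit crossing-free drawing. In the interior of each region $V(p)$ pick a point $a_p$ lying in the \emph{core} of $V(p)$, i.e.\ the part of $V(p)$ covered by no region of the opposite color class. For each edge $pq$ (say $p\in A$, $q\in B$) the overlap $V(p)\cap V(q)$ is a nonempty \emph{lens}; fix a point $m_{pq}$ in its interior, and observe that, because the regions within each family are disjoint, $m_{pq}$ lies in no region other than $V(p)$ and $V(q)$. Draw the edge as the concatenation of a \emph{$p$-side} from $a_p$ to $m_{pq}$ routed inside $V(p)$ while avoiding every other region of $B$, and a \emph{$q$-side} from $m_{pq}$ to $a_q$ routed inside $V(q)$ while avoiding every other region of $A$.

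Given this routing, the no-crossing check is pure bookkeeping about which region each piece inhabits. For two independent edges $e=pq$ and $f=p'q'$ I would check the four combinations: the same-family sides lie in the disjoint regions $V(p),V(p')$ or $V(q),V(q')$; and an opposite-family pair, for instance the $p$-side of $e$ (which avoids $V(q')$) against the $q'$-side of $f$ (which lies in $V(q')$), is disjoint by construction, and symmetrically for the other pair. Hence independent edges are disjoint. Edges sharing a vertex $p$ meet only inside $V(p)$ through their $p$-sides, and these can be drawn as a non-crossing star rooted at $a_p$; so the drawing is plane and $G$ is planar.

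The main obstacle is precisely the routing step inside a single region: for a fixed $p$, the lenses $L_i=V(p)\cap V(q_i)$ over its neighbors must be reachable from $a_p$ without entering one another. Here the pseudodisk property is essential. Because $\bd V(p)$ and $\bd V(q_i)$ cross at most twice, the part of $\bd V(p)$ lying inside $V(q_i)$ is a single connected arc, and these arcs (for the disjoint $V(q_i)$) are disjoint; together with convexity of the Voronoi cells this makes each lens a boundary-attached cap, so the core of $V(p)$ is connected and nonempty and $V(p)\setminus\bigcup_{j\ne i}L_j$ remains connected and contains both $a_p$ and $L_i$. I would spend most of the care making this topological claim precise, since it is exactly what fails for two arbitrary families of disjoint regions—three horizontal and three vertical strips realize the non-planar bipartite $K_{3,3}$, and there a strip's boundary meets a crossing strip in \emph{two} arcs, disconnecting the core and breaking the routing.
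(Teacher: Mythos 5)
Your strategy is sound, and it is genuinely different from the paper's. The paper argues locally and elementarily: a non-planar graph drawn with straight-line edges must contain two crossing edges $p_1q_1$ and $p_2q_2$, with $p_1,p_2$ in one class and $q_1,q_2$ in the other; the four endpoints form a convex quadrilateral, and since its angles sum to $360^\circ$, the inscribed-angle theorem forces every disk having one of the two crossing edges as a chord to contain one of the remaining two vertices; shrinking the witness-free disk certifying that edge by a homothety then yields a witness-free disk through two same-class vertices, contradicting bipartiteness. You instead go global through Lemma~\ref{lem:pseudodisks}: bipartiteness makes each color class a pairwise-disjoint family of cells, and you prove that the bipartite intersection graph of two such families of pseudodisks is always planar. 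This buys a strictly more general, purely topological statement (of which the theorem is a corollary) and isolates the hypothesis that does the work---your strips-realize-$K_{3,3}$ foil is exactly the right contrast; the price is topological bookkeeping that the paper's proof never needs. Your identification of the crux is also correct and correctly justified: at most two boundary crossings force $\bd V(p)\cap V(q_i)$ to be a single arc, so the lenses are pairwise-disjoint caps attached to $\bd V(p)$ along disjoint boundary arcs; peeling off the caps one at a time (no cap meets the inner boundary arc of another, by disjointness of the $V(q_i)$) keeps a topological disk at every step, which gives both the connectivity you need and, via the cyclic order of the caps along $\bd V(p)$, the non-crossing star at $a_p$.

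Three repairs are needed for a complete write-up. First, compactifying on $S^2$ breaks your hypotheses: the closures of two disjoint unbounded cells both pick up the point at infinity, so the families are no longer pairwise disjoint there; clip everything to a large disk containing all sites and all boundary intersections instead. Second, the overlap-along-a-segment case cannot be waved away by perturbing $P\cup W$, since a perturbation may change which graph is being realized; note instead that overlap occurs precisely when $p$ and $q$ are both witnesses and their cells share an edge of $\Vor(W)$, in which case the lens is that segment, $m_{pq}$ can be taken in its relative interior, and your crossing analysis goes through verbatim. Third, your dismissal of the case $W=\{p,q\}$ rests on a false claim---with exactly two witnesses the graph need not be trivially planar or complete---but no dismissal is needed: there $V(p)$ and $V(q)$ are complementary half-planes, their common boundary line serves as the lens certifying the edge $pq$ (which holds by definition), and the rest of your argument is unaffected.
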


\begin{proof}
  The proof is by contradiction.  If a realization of a non-planar
  bipartite graph $G$ as a witness Delaunay graph $\DG^-(P,W)$ exists,
  it must contain two crossing edges $p_1 q_1$ and $p_2 q_2$, with
  $p_1$ and $p_2$ belonging to the same part of the bipartite graph,
  and $q_1$ and $q_2$ belonging to the other part.  The vertices
  $p_1$, $p_2$, $q_1$, $q_2$ form a convex quadrilateral $Q$, and we
  may assume without loss of generality that they occur in this order
  along the boundary of $Q$ (see figure~\ref{WDGnoBipartite}).

  \begin{figure}
    \centering
    \includegraphics[scale=1]{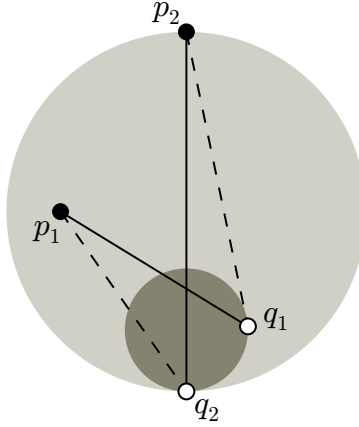}
    \caption{Solid edges are present in the graph, while dashed ones
      may or may not be.}
    \label{WDGnoBipartite}
  \end{figure}
  As $G$ is a bipartite graph, it does not contain $p_1p_2$ or
  $q_1q_2$.  As the sum of the interior angles of a quadrilateral
  equals $360^\circ$, and the vertices are in general position, either
  $\measuredangle p_1 p_2 q_1 + \measuredangle q_1 q_2 p_1 <
  180^\circ$ or $\measuredangle p_2 q_1 q_2 + \measuredangle q_1 q_2
  p_1 < 180^\circ$.  Without loss of generality, suppose
  $\measuredangle p_1 p_2 q_1 + \measuredangle q_1 q_2 p_1 <
  180^\circ$. Then any disk $D$ with $p_2 q_2$ as a chord will contain
  $p_1$, $q_1$, or both.  Let $D_{p_2q_2}$ be the witness-free disk
  certifying the edge $p_2q_2 \in G$.  Since $p_2q_2$ is a chord of
  this disk, it must contain one of $p_1$, $q_1$.  Suppose without
  loss of generality that $D_{p_2q_2}$ contains $q_1$.
  By shrinking $D_{p_2q_2}$ by a positive homothety
  with center at $q_2$ until its boundary passes through $q_1$, we
  obtain a disk $D_{q_1q_2} \subseteq D_{p_2q_2}$ not covering any
  witnesses, whose boundary contains $q_1,q_2$, contradicting $q_1q_2
  \not\in G$.
\end{proof}

\section{Square graphs}\label{section:squares}
In this section we use the term \emph{square graph} as short for the
square graph of a point set $P$ (the \emph{vertices}) with respect to a second
point set $W$ (the \emph{witnesses}); we recall that two points $x,y\in P$ are
adjacent in the graph $\SG^{+}(P,W)$ if and only if there is an axis-aligned square with $x$ and $y$ on
its boundary whose interior contains some witness point $q\in W$. As mentioned in the introduction, this
is the positive witness version on the Delaunay graph for the
$L_{\infty}$ metric.  We assume that no two distinct points in $P\cup W$ have equal $x$- or
$y$-coordinates and let $n \mathop{:=} \max\{|P|,|W|\}$.  (In this
section, we do not require that no three points be collinear.)  We denote by
$E$ the edge set of the graph; we partition $E$ into $E^+$ and $E^-$
according to the slope sign of the edges when drawn as segments.

First, a simple geometric observation:
\begin{observation} \label{obs:shrunken-square}
  If $R$ is a closed square containing points $p$ and $q$ then there exists a
  square $R_{pq}\subset R$ whose boundary passes through $p$ and $q$.
\end{observation}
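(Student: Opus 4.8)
The plan is to mimic the two-step shrinking argument used for Observation~\ref{obs:shrunken-disk}, but now respecting the rigidity of axis-aligned squares. The statement asserts that if a closed axis-aligned square $R$ contains $p$ and $q$, then some sub-square $R_{pq}\subset R$ has both points on its boundary. The key difference from the disk case is that a square is not defined by its center and a single radius that can shrink continuously while fixing the center; instead an axis-aligned square has two degrees of freedom for position (the location of, say, its lower-left corner) plus one for side length, and shrinking toward a corner is the natural analogue of a homothety.

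First I would shrink $R$ toward one of its corners---equivalently, apply a homothety with a fixed corner as center and ratio decreasing from $1$---until one of the two points $p,q$ is about to leave the square, i.e., until one of them lands on $\bd R'$, where $R'$ is the resulting square. Without loss of generality say $p\in\bd R'$. The point $p$ now lies on one of the four sides of $R'$. Next I would perform a homothety centered at $p$ with ratio decreasing from $1$, so $R'$ shrinks toward $p$ while $p$ stays on its boundary; I continue until $q$ is about to leave, which places $q$ on $\bd R''$. The resulting square $R''=R_{pq}$ then has both $p$ and $q$ on its boundary and satisfies $R_{pq}\subset R$, as desired. This is precisely the square-metric translation of the disk argument.

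The step I expect to be the main obstacle---or at least the one requiring care---is verifying that a homothety centered at $p$ genuinely keeps $p$ on the boundary of the shrinking square while letting $q$ reach the boundary before $p$ is lost. In the disk case this is automatic because a homothety centered at a boundary point fixes that point and any boundary point maps to a boundary point of the image disk. For an axis-aligned square a homothety centered at $p$ preserves axis-alignment and maps $\bd R'$ to $\bd R''$, so if $p\in\bd R'$ then $p\in\bd R''$ for every ratio; hence $p$ stays on the boundary throughout, and by continuity there is a first ratio at which $q$ reaches $\bd R''$. One subtlety worth a sentence is the degenerate situation where $p$ sits exactly at a corner of $R'$: there the homothety toward $p$ still keeps $p$ on the boundary (indeed at a corner), so the argument goes through unchanged, and since the paper does not forbid collinearities or equal coordinates among $P\cup W$ in this section, I would just note that these boundary cases cause no difficulty.
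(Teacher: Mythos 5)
Your proposal is correct and follows essentially the same two-step shrinking argument as the paper: a first homothety (the paper centers it at the center of $R$, you at a corner --- an immaterial difference, since any homothety with center in $R$ and ratio in $[0,1]$ keeps the square inside $R$) to bring one point onto the boundary, followed by a homothety centered at that boundary point until the second point reaches the boundary. Your added remarks on why the homothety centered at $p$ preserves $p\in\bd R'$ and on the corner-degenerate case are correct but not needed beyond what the paper states.
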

\begin{proof}
  Let $c$ be the center of $R$.  Shrink~$R$ while keeping its center
  at $c$ until it is about to lose $p$ or $q$.  Let the resulting square
  be $R'$.  Without loss of generality, let $p \in \bd R'$.
  Shrink~$R'$ by a homothety with center $p$ until it is about to lose
  $q$.  The result is the desired square~$R_{pq}$.
\end{proof}

The isothetic rectangle (\emph{box}) defined by two points $p,q$ in
the plane is denoted $B(p,q)$.  For an edge $e=pq$ we also write
$B(e)$ instead of $B(p,q)$.  Every edge $e$, say in $E^+$, defines
four regions in the plane as in Figure~\ref{fig:cornerBay}, that we call
\emph{corners} and \emph{bays}.
A corner is a closed set while a bay is an open set.

\begin{figure}[htbp!]
  \centering
  \includegraphics[scale=1]{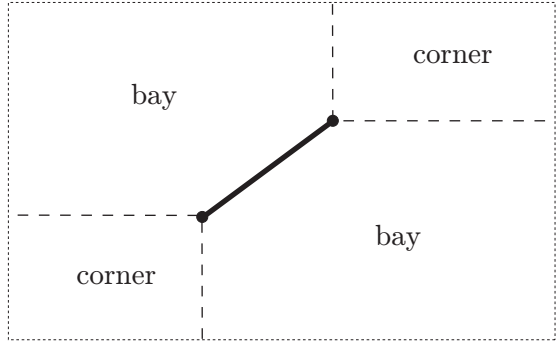}
  \caption{Corners and bays.}
  \label{fig:cornerBay}
\end{figure}

If $p,q\in P$, then $pq\in E$ if and only if the
union of the bays of $pq$ contains some witness or, equivalently,
if and only if $W$ is not contained in the union of the corners.
In particular, the placement of just two witnesses, for example,
just outside and
very close to the top corners of an axis-aligned rectangle
enclosing $P$ suffices to yield a complete graph, because every
upper bay would contain a witness.  Also, as the
bays associated with a pair of points $p,q$ cover the vertical open strip
delimited by the lines $x=x(p)$ and $x=x(q)$ and the horizontal
open strip delimited by the lines $y=y(p)$ and $y=y(q)$, we deduce the
following useful fact.

\begin{observation}
  \label{obs:strips}
  If there is a witness point $w\in W$ such that $x(w)$ is between
  $x(p)$ and $x(q)$ or $y(w)$ is between $y(p)$ and $y(q)$,
  then $pq$ is an edge of $\SG^{+}(P,W)$.
\end{observation}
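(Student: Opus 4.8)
The plan is to reduce the statement directly to the two facts the paper has just established: the edge characterization that, for $p,q\in P$, we have $pq\in E$ if and only if the union of the two bays of $pq$ contains some witness; and the geometric observation that the bays of $pq$ cover the open vertical strip between the lines $x=x(p)$ and $x=x(q)$ together with the open horizontal strip between $y=y(p)$ and $y=y(q)$. With these in hand the observation is essentially a corollary, so the work is just to name the correct strip and invoke the characterization.

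First I would treat the case in which some witness $w\in W$ has $x(w)$ strictly between $x(p)$ and $x(q)$. By the standing assumption that no two points of $P\cup W$ share an $x$-coordinate, ``between'' here means \emph{strictly} between, so $w$ lies in the open vertical strip delimited by $x=x(p)$ and $x=x(q)$. By the strip-covering fact, that strip is contained in the union of the bays of $pq$, so $w$ lies in one of the bays. The edge characterization then yields $pq\in E$ at once.

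The case in which some witness has $y$-coordinate strictly between $y(p)$ and $y(q)$ is entirely symmetric, using the open horizontal strip in place of the vertical one, so I would note the symmetry rather than repeat the argument. Since the hypothesis of the observation is precisely the disjunction of these two cases, handling both completes the proof.

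As for where the difficulty lies: there is no real obstacle, since both ingredients are already available from the preceding discussion. The only point deserving a moment's care is that the strips and the bays are \emph{open} sets, so a witness lying exactly on one of the bounding lines $x=x(p)$ or $x=x(q)$ (or their horizontal analogues) would not be captured. This is exactly why strict betweenness is needed, and it is guaranteed by the distinct-coordinates hypothesis; hence the argument goes through with no further case analysis.
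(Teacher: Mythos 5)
Your proof is correct and follows exactly the paper's own (implicit) argument: the observation is derived there from the same two facts, namely that $pq\in E$ if and only if a bay of $pq$ contains a witness, and that the bays cover the open vertical and horizontal strips determined by $p$ and $q$. Your remark on strict betweenness via the distinct-coordinates assumption is a sound handling of the openness issue, which the paper leaves tacit.
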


Computing how many witnesses are contained in quadrant~I and
quadrant~IV for every $p\in P$ can be carried out in overall $O(n\log
n)$ time with a line sweep from right to left, and by keeping the set
of witnesses already encountered stored in a balanced search tree,
sorted by the $y$-coordinate; a sweep in the opposite direction
handles the remaining two quadrants.  After that every pair of points
$p,q\in P$ can be checked for adjacency in constant time and therefore
the square graph $\SG^{+}(P,W)$ can be computed in $O(n^2)$ time,
which is worst-case optimal.  We describe next an output-sensitive
algorithm.

\begin{theorem}
  \label{thm:sensitiveSquare}
  Let $P$ and $W$ two point sets in the plane, and $n \mathop{:=}
  \max\{|P|,|W|\}$.  The square graph $\SG^{+}(P,W)$ can be computed
  in optimal $O(k+n \log n)$ time, where $k$ is the number of edges in
  $\SG^{+}(P,W)$.
\end{theorem}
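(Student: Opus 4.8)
The plan is to reduce the problem to output-sensitive reporting of dominating point pairs satisfying a simple counting predicate, and then to carry out that reporting by a plane sweep carrying a priority search tree. First I would use the given partition $E=E^+\cup E^-$ and note that reflecting the plane in a horizontal line exchanges the two classes, so it suffices to compute $E^+$; running the same algorithm on the reflected instance then yields $E^-$.

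The crucial combinatorial step is a clean characterization of $E^+$. For a positive-slope candidate edge $pq$, write $p$ for the lower-left and $q$ for the upper-right endpoint (so $x(p)<x(q)$ and $y(p)<y(q)$). By the corner/bay analysis preceding Observation~\ref{obs:strips}, the two corners of $pq$ are exactly the lower-left quadrant anchored at $p$ and the upper-right quadrant anchored at $q$, and $pq\in E^+$ if and only if some witness avoids both corners, i.e.\ lies in a bay. Let $L(p):=|\{w\in W: x(w)<x(p),\ y(w)<y(p)\}|$ and $U(q):=|\{w\in W: x(w)>x(q),\ y(w)>y(q)\}|$. Since $x(p)<x(q)$ the two corner quadrants are disjoint, and our general position assumption puts no witness on a quadrant boundary, so the number of witnesses in the corners is exactly $L(p)+U(q)$. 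Hence
\begin{equation*}
  pq\in E^+ \iff L(p)+U(q) < |W|.
\end{equation*}

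Next I would precompute $L(\cdot)$ and $U(\cdot)$ for every vertex. Each is a planar dominance count of $P$ against $W$, obtained by sorting $P\cup W$ and sweeping while maintaining the $y$-coordinates of the witnesses already seen in a balanced search tree; this costs $O(n\log n)$. The reporting phase then sweeps the vertices by decreasing $x$, maintaining the already-processed vertices in a dynamic priority search tree keyed by $(y,U)$. When a vertex $p$ is processed, every previously inserted vertex $q$ has $x(q)>x(p)$, so the positive-slope edges incident to $p$ having $p$ as lower-left endpoint are exactly the inserted $q$ with $y(q)>y(p)$ and $U(q)<|W|-L(p)$---a two-sided (quadrant) reporting query, answered by the priority search tree in $O(\log n + t)$ time, where $t$ is the number of edges reported; then $p$ is inserted in $O(\log n)$. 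Each positive edge is reported exactly once, from its lower-left endpoint, so the sweep runs in $O(n\log n + k^+)$ time, and adding the reflected run gives $O(n\log n + k)$. Optimality follows from the trivial $\Omega(k)$ output bound together with an $\Omega(n\log n)$ lower bound by a standard reduction from sorting on instances whose square graph has only $O(n)$ edges.

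I expect the main obstacle to be the time bound of the reporting phase: a naive range-tree solution to the quadrant queries loses a logarithmic factor, and the point of sweeping in $x$ is precisely to collapse the three constraints $x(q)>x(p)$, $y(q)>y(p)$, and $U(q)<|W|-L(p)$ into a dynamic two-dimensional problem that a priority search tree handles with $O(\log n)$ insertions and $O(\log n + t)$ queries. The other point that must be pinned down is the count identity above---in particular that the bays are exactly the complement of the two anchored quadrants, so that the adjacency test reduces to the single inequality $L(p)+U(q)<|W|$---and this rests entirely on the geometry already established for Observation~\ref{obs:strips}.
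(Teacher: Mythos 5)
Your upper-bound argument is correct, and it takes a genuinely different route from the paper's. The counting criterion $pq\in E^+\iff L(p)+U(q)<|W|$ is sound: for a positive-slope pair the two corners are exactly the closed lower-left quadrant of $p$ and the closed upper-right quadrant of $q$, these are disjoint because $x(p)<x(q)$, and adjacency means not all witnesses lie in the corners. (The paper uses this same criterion, but only for its quadratic algorithm. One caveat: the paper allows $P\cap W\neq\emptyset$, and a witness coinciding with $p$ or $q$ lies in a closed corner but is missed by your strict counts, so $L$ and $U$ should be defined with weak inequalities.) Your reporting phase---sweep by decreasing $x$, with a dynamic priority search tree keyed by $(y,U)$ answering the quadrant query $y(q)>y(p)$, $U(q)<|W|-L(p)$ in $O(\log n+t)$ time---reports each positive-slope edge exactly once from its lower-left endpoint and gives $O(k+n\log n)$ overall. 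The paper proceeds quite differently: a first phase reports all pairs having a witness in their vertical strip, and a second right-to-left sweep maintains only the lowest witness to the right, the highest witness to the left, and a $y$-sorted list of vertices, reporting edges by linear scans. The paper's version needs no two-dimensional data structure; yours replaces its case analysis by a single inequality at the price of a (standard) priority search tree. Both achieve the same bound.

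The genuine gap is in the optimality claim. You invoke ``a standard reduction from sorting on instances whose square graph has only $O(n)$ edges,'' but you exhibit no such instances, and it is doubtful they exist: a sorting reduction needs the graph to reveal the sorted order, yet positive-witness adjacency is an existential, monotone condition that conflicts with sparsity. Concretely, for any three points $p,q,r$ forming a monotone chain, the corners of the pair $(p,r)$ are contained in the corners of $(p,q)$, so $pq\in E$ forces $pr\in E$; consequently, on chain-like configurations every sparse square graph is a ``staircase'' graph with $\Omega(n)$ vertices sharing identical neighborhoods, which cannot encode an order, while configurations where adjacency encodes ``some element lies between $x_i$ and $x_j$'' yield $\Theta(n^2)$ edges, where the trivial $\Omega(k)$ term already dominates and no contradiction with $O(k+n\log n)$ arises. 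The paper sidesteps this entirely by reducing from \textsc{uniqueness} (element distinctness): map each input $x_i$ to $p_i=(x_i-\frac{i}{10n},\,x_i+\frac{i}{10n})$ and take the single witness $(0,0)$; then $\SG^{+}(P,\{(0,0)\})$ is empty if and only if all $x_i$ are distinct, so the $\Omega(n\log n)$ algebraic-computation-tree lower bound applies to instances with $k=0$, which is exactly what optimality of $O(k+n\log n)$ requires. You should replace your sorting sketch by this distinctness-based argument.
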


\begin{proof}
We first detect all pairs of points $p,q\in P$ such that the open
strip bounded by the vertical lines through these points covers
some witness, making them adjacent in the graph
(Figure~\ref{fig:fourthQuadrantCS}, left).  For this, it
suffices to consider the projection $z^*$ of all the points $z\in
P\cup W$ onto the $x$-axis. Once the projections are sorted, it is
clear that for every $p\in P$, if $w$ is the first witness such
that $x(p)<x(w)$, we can simply list all the $q\in P$ such that
$x(w)<x(q)$. After the $O(n\log n)$ sorting step, a simple scan
gets every adjacency listed once, and the global cost is
proportional to their number. On the other hand, in $O(n)$ time
after sorting, we can also store for each point $p\in P$ the
number of witness projections to the right of $p^*$. This will
later allow us to detect in constant time whether there is a
witness in the vertical strip defined by $p,q\in P$.

\begin{figure}[htbp!]
  \centering
  \includegraphics[scale=1]{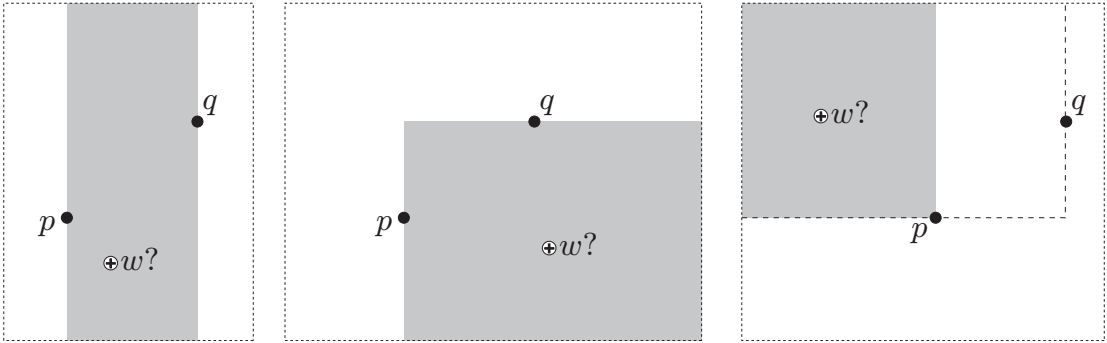}
  \caption{Illustrating the proof of Theorem~\ref{thm:sensitiveSquare}.}
  \label{fig:fourthQuadrantCS}
\end{figure}

  Next we explain how to find the pairs of points $p,q\in P$, with $x(p)<x(q)$, such that the
  slope of the segment $pq$ is positive, $p\sim q$ in $\SG^{+}(P,W)$, and the adjacency has
  not been reported in the previous
  step. The remaining case is handled in a symmetric
  manner.

Sweep from right to left with a vertical line $\ell$, and maintain
the lowest witness $w_R\in W$ to the right of $\ell$ and the
highest witness $w_L\in W$ to the left of $\ell$ . In addition,
maintain a $y$-sorted list $L$ of the points of $P$ to the right
of $\ell$.  When the sweep line finds a point $p\in P$, we report
all the points $q\in L$ that are above $w_R$ and $p$, by a simple
linear scan of the list from $p$, omitting those that have
reported as adjacent to $p$ in the preceding step that checked the
vertical strip between them (Figure~\ref{fig:fourthQuadrantCS},
center).

If both $w_L$ and $w_R$ are above $p$, we additionally report all
the points $q\in L$ that are above $p$ and  below $w_R$,
performing a second linear scan of the list from $p$, again
omitting those that have reported as adjacent to $p$ in the first
step (Figure~\ref{fig:fourthQuadrantCS}, right).

The involved costs are $\Theta(1)$ per edge found, $\Theta(\log
n)$ to insert $p$ into $L$, and $\Theta(1)$ to update $w_L$ and
$w_R$  when a witness from $W$ is encountered by the sweep line.

This process must be repeated from left to right for edges with
negative slope.  Overall, all edges will be found and each
one reported exactly once, which proves that the graph can be
computed in $O(k+n \log n)$ time, as claimed.

Let us now show that this is optimal. The lower bound $\Omega(k)$ is
obvious. To see the $\Omega(n \log n)$ part, we use a reduction from
the \textsc{uniqueness} problem: ``Given $n$ positive integers, decide
whether all of them are distinct.'' which is known to have an $\Omega(n
\log n)$ lower bound in the algebraic computation tree model
\cite{Y91}.

Now, given positive integers
$S=\{x_1, ..., x_n\}$, consider the point set $P=\{p_1,\ldots,p_n\}$,
with $p_i=(x_i-\frac{i}{10n},x_i+\frac{i}{10n})$.  It is easily
checked that $P$ is a set of points near the line $x=y$, such that, as
long as $x_i\neq x_j$, for $i\neq j$, the slope of the segment
$p_ip_j$ is positive.  However, if $x_i = x_j$, for some $i \neq j$,
the slope of $p_ip_j$ is $-1$.  In particular, $SG^+(P,\{(0,0)\})$ has
no edges if and only if all numbers in $S$ are distinct.  This
completes the description of a linear-time reduction from
\textsc{uniqueness} to the computation of the square graph, hence
proving the claimed complexity lower bound.
\end{proof}

Before describing the combinatorial structure of square graphs, we
recall some well-known definitions.

Given points $a, b\in \reals^d$, with $a=(a_1,\dots,a_d)$ and
$b=(b_1,\dots,b_d)$ we say that $a$ \emph{dominates} $b$ (denoted as
$a \geq b$ or $b \leq a$) when
$a_i\ge b_i$ for $i=1,\dots,d$.  Given a partially ordered set
 $\P=(X,{\le}_{\P})$, a $d$\emph{-dominance realization}
of $\P$ is a function $f:X\rightarrow\reals^d$
such that $x\le_{\P} y$ if and only if $f(x) \mathbin{\leq} f(y)$,
for all $x,y\in X$.

The smallest $d$ such that $\P=(X,\le)$ admits a $d$-dominance
realization is called the \emph{dimension} of the partial order $\P$.
Equivalently, $d$ is the smallest integer such that $\P$ is the
intersection of $d$ total orders that are extensions of $\P$.  The
concept of dimension was introduced, and the equivalence of the
definitions proved, in \cite{DM41}.

The undirected graphs underlying partial orders (i.e., for distinct $x,y$, $x\sim y$
when $x\le y$ or $y\le x$) are called \emph{comparability graphs}.
It has been proved (see section 6.2 in \cite{survey99}) that any
two partial orders whose underlying comparability graphs are the
same must have the same dimension, and therefore we can call this
number the \emph{dimension of the comparability graph}.  The
comparability graphs corresponding to two-dimensional partial
orders are called \emph{permutation graphs} (this name arose in a
different context yet equivalence was established).

We are now ready for our main result in this section, a complete
characterization of square graphs:

\begin{theorem} \label{thm:squareStructural} A combinatorial graph
  $G=(V,E)$ can be realized as a square graph $\SG^{+}(P,W)$ for
  suitable point sets $P$ and $W$ in the plane if and only it is a permutation graph.  Moreover, any square graph
  can be realized using at most one witness.
\end{theorem}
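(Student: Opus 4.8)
The plan is to prove both directions by reducing everything to a single clean geometric fact about square-enclosure, and then translating the resulting combinatorial adjacency rule into the language of two-dimensional posets. Throughout I would use Observation~\ref{obs:shrunken-square} (any square containing $p,q$ can be shrunk to one with $p,q$ on its boundary) together with Observation~\ref{obs:strips}. The \textbf{unified corner/bay lemma} I would establish first is this: for \emph{any} pair $p,q\in P$, place them at opposite corners of the box $B(p,q)$; then $pq$ is an edge of $\SG^{+}(P,W)$ if and only if some witness lies in the complement of the two \emph{outer quadrant-cones} at $p$ and $q$ (the quadrants pointing away from each other). For a positive-slope pair with $p$ the SW endpoint and $q$ the NE endpoint these cones are exactly the dominance cones $\{z\le p\}$ and $\{z\ge q\}$; for a negative-slope pair they are the two anti-dominance cones. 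I would prove this by the square-enclosure analysis: a witness strictly inside one of the outer cones can never lie inside a square having $p,q$ on its boundary (a short case check on which edges of a growing square $p$ and $q$ must sit on), while any point outside both cones can be enclosed by a sufficiently large or suitably slid square, certified by Observation~\ref{obs:shrunken-square}.

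For the \emph{if} direction (including the ``moreover''), I would use the stated characterization of permutation graphs as comparability graphs of two-dimensional posets, together with their closure under complementation. Given a permutation graph $G$, its complement $\overline{G}$ is again the comparability graph of some planar point realization under dominance; concretely there are points with distinct coordinates so that $uv\in\overline{G}$ iff the pair has positive slope. Translating this point set into the open first quadrant $Q_1$ and placing a single witness $w$ at the origin, the unified lemma specializes (since the origin is strictly SW of every vertex) to: $pq$ is an edge iff $w$ lies outside $\{z\le p\}\cup\{z\ge q\}$, which for $Q_1$ happens exactly when $p,q$ are dominance-incomparable, i.e.\ have negative slope. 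Hence $\SG^{+}(P,\{w\})$ realizes precisely the incomparability graph, which is $G$. A final perturbation removes any coordinate ties, matching the general-position hypothesis of the section.

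For the \emph{only if} direction I would show that every $\SG^{+}(P,W)$ is a permutation graph by exhibiting a two-dimensional dominance realization of its \emph{complement} $\overline{G}$. The single-witness case is the conceptual template and I would prove it in full: with $w$ at the origin the unified lemma makes cross-quadrant pairs always edges, while same-quadrant pairs follow dominance in $Q_1,Q_3$ and anti-dominance in $Q_2,Q_4$; thus $\overline{G}$ is a disjoint union (across quadrants) of four two-dimensional-poset comparability graphs, which embeds in a single planar dominance order by placing the four blocks along a decreasing ``staircase'' (so cross-block pairs are forced incomparable) and realizing each block's dominance or anti-dominance order locally. This shows $G$ is a permutation graph, and one witness always suffices for its class, so the ``moreover'' clause follows once both directions are in hand.

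The \textbf{main obstacle} is extending the previous paragraph from one witness to an arbitrary witness set, since $\SG^{+}(P,W)$ is the \emph{union} of the single-witness edge sets and permutation graphs are not closed under union in general. Here I would replace the four quadrants by the finer partition of the plane induced by the witness ``staircase'' and use the non-edge form of the unified lemma---$pq$ is a non-edge iff \emph{every} witness lies in the outer cone of $p$ or of $q$---to argue that within each diagonal class of cells the adjacency rule is again a uniform dominance or anti-dominance relation, while pairs lying in different classes are always edges. The same block-on-a-staircase embedding should then yield a single two-dimensional dominance realization of $\overline{G}$. The delicate point, and the step I expect to require the most care, is verifying that the class structure arising from a general staircase is itself compatible with a planar dominance embedding (equivalently, that the orientation of $\overline{G}$ so obtained is transitive) uniformly across both slope types; this is exactly where the general-$W$ argument goes beyond the clean single-witness computation.
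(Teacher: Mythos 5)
Your skeleton tracks the paper's proof quite closely: the ``unified corner/bay lemma'' is exactly the corners-and-bays criterion the paper states just before the theorem ($pq\in E$ iff $W$ is not contained in the union of the two corners), the complement-decomposition into dominance/anti-dominance comparability graphs per region is the paper's argument, and the single-witness converse is the paper's (your version, which realizes $\overline{G}$ as the dominance graph so that $\SG^{+}(P,\{w\})$ comes out equal to $G$, is in fact stated more carefully than the paper's, which elides the complementation step). Moreover, the step you flag as the ``main obstacle'' is not actually an obstacle, and your ``diagonal classes of cells'' idea is a wrong turn: cut the plane by the vertical and horizontal lines through \emph{all} witnesses; by Observation~\ref{obs:strips} no non-edge of $\SG^{+}(P,W)$ crosses any of these lines, so $\overline{G}$ is the disjoint union, cell by cell, of the complements of the induced subgraphs---no grouping of cells and no transitivity verification is needed. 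Within a single open cell the relevant witnesses lie in the four quadrants determined by that cell's two strips, and the induced rule is complete (adjacent quadrants occupied) or dominance/anti-dominance (only opposite quadrants occupied); a disjoint union of permutation graphs is a permutation graph, and the class is closed under complement. This is precisely how the paper disposes of the general witness set.

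The genuine gap is elsewhere: your proposal silently assumes $P\cap W=\emptyset$ and never treats vertices that are simultaneously witnesses, which the definition of $\SG^{+}(P,W)$ permits (the section's general position hypothesis only forbids coordinate ties between \emph{distinct} points of $P\cup W$, and the paper's own proof spends the majority of its length on this case). Such a vertex-witness $q$ lies \emph{on} the grid lines, hence in no open cell, so the per-cell decomposition says nothing about its adjacencies; and there is no easy reduction, since deleting or perturbing the witness copy of $q$ changes the graph (a witness exactly at $q$ is never interior to a square through $q$, but a nearby one can be). The paper shows that in this case the components of $\overline{G}$ are not disjoint unions of permutation graphs but chains $\langle K_0, q_1, K_1, q_2, \ldots, q_\ell, K_\ell\rangle$, where the $q_i$ are vertex-witnesses forming a chain or antichain in the dominance order, each $K_i$ is a permutation graph, and every vertex of $K_i$ is adjacent in $\overline{G}$ to $q_i$ and $q_{i+1}$; it then exhibits an explicit two-dimensional dominance realization of each such chain (and handles the special case $W=\{q\}$ separately). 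Without this analysis---roughly half of the paper's proof---your ``only if'' direction establishes the theorem only for realizations with $P$ and $W$ disjoint, not for arbitrary square graphs.
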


\begin{proof}
  Let $G=\SG^{+}(P,W)$ be a square graph and $G'$ its complement. Recall that we assume that
  no two distinct points in $P\cup W$ have equal $x$- or
  $y$-coordinates.  Suppose first that $P \cap W = \emptyset$, i.e.,
  no vertex is also a witness.  We will remove this assumption below.

  Draw a vertical line and a horizontal line through each witness,
  partitioning the plane into open \emph{boxes}.  From
  Observation~\ref{obs:strips} we know that no edge of $G'$ crosses
  any of these lines.

  Consider such a box $B$.  By construction, the vertical open strip
  containing $B$ covers no witnesses and the same is true of the
  horizontal open strip containing $B$.  We partition the complement of the
  union of these strips into four closed \emph{quadrants of $B$},
  numbered I~through~IV (if $B$ is unbounded in one or more
  directions, we simply treat two or more of the quadrants as empty
  sets), refer to Figure~\ref{fig:SquaresQuadrants}.

  \begin{figure}[htbp!]
    \centering
    \includegraphics{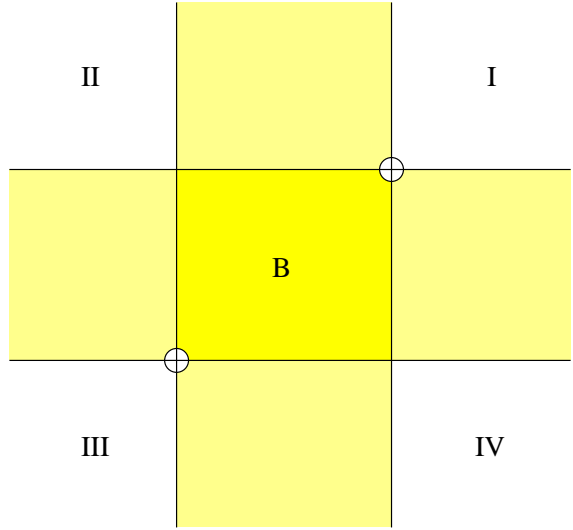}
    \caption{Illustrating the proof of Theorem~\ref{thm:squareStructural}.
      The two open strips are light yellow.  Their complement consists
      of the four quadrants numbered I~through~IV.}
    \label{fig:SquaresQuadrants}
  \end{figure}

  Putting $P_B \mathop{:=} B \cap P$, let $G_B=(P_B,E_B)$ be the
  subgraph of $\SG^{+}(P,W)$ induced on $P_B$, and let
  $G'_B=(P_B,E'_B)$ be its complement.  If there is a pair of adjacent
  quadrants of $B$ each containing a witness, the graph $G_B$ is
  complete and its complement $G'_B$ is the empty graph on $P_B$.  The
  empty graph is certainly a two-dimensional comparability graph, as
  it suffices to take a sequence of points with increasing abscissae
  and decreasing ordinates.

  There is only one case remaining: there are witnesses only in one
  pair of opposite quadrants of $B$ (one of these opposite quadrants
  could be empty of witnesses); without loss of generality, these
  quadrants are I~and~III.  Then $p,q\in P_B$ define an edge of $G_B$
  if, and only if, the slope of the segment $pq$ is negative.  Hence
  $G'_B$ is the comparability graph underlying the dominance relation
  for $P_B$ with the current system of coordinates.

  Therefore, we have proved that the complement of $\SG^{+}(P,W)$ is
  the disjoint union of permutation graphs which itself is a
  permutation graph, if no vertex is also a witness. As the complement of a permutation graph is also a permutation graph (see \cite{PermutationGraphs}), we have proved that $\SG^{+}(P,W)$ is a permutation graph as well.

  Now suppose $P \cap W \neq \emptyset$.  The above argument applies
  verbatim to the subgraph of $G'$ induced on $P \setminus W$, i.e.,
  to the non-adjacencies between non-witness vertices.  Let $q \in P
  \cap W$.  Let the superbox $H$ of $q$ be the smallest open box
  enclosing the four open boxes (which we call $B_{\mathit{I}}, B_{\mathit{II}}, B_{\mathit{III}},
  B_{\mathit{IV}}$ according to their position around $q$) adjacent to
  $q$; refer to Figure~\ref{fig:SquaresQuadrants2}.
  Let $\bar H$ be the closure of $H$.  Using
  Observation~\ref{obs:strips}, we conclude that $q$ is adjacent in
  $G$ to every vertex outside of $\bar H$.  In particular, in $G'$,
  all neighbors of $q$ lie in $\bar H$.

  \begin{figure}[htbp!]
    \centering
    \includegraphics{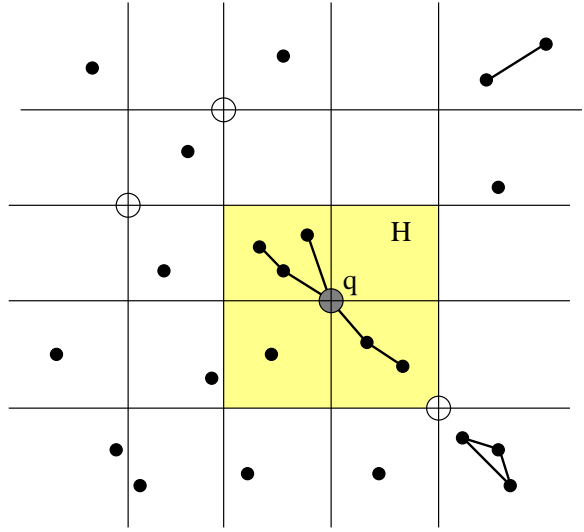}
    \caption{
      The superbox $H$ is shaded yellow, $q \in P \cap W$ is grey,
      points of $P \setminus W$ are black, and witnesses of $W \setminus P$ are white with a cross.}
    \label{fig:SquaresQuadrants2}
  \end{figure}

  We first consider the special case $W=\{q\}$.  As argued above,
  disregarding $q$, the complement of $G$ is the disjoint union of at
  most four permutation graphs, one for each of the boxes surrounding
  $q$.  By definition of a square graph, $q$ is not adjacent to
  anything in $G$ and hence $G'$ is formed by taking the disjoint
  union of four or fewer permutation graphs and adding a vertex
  adjacent to all other vertices.  We argue that then $G'$ is a single
  permutation graph and hence $G$ is a permutation graph as well.  Indeed, form a 4-by-4 grid in the plane and draw
  each of the permutation graphs of $G'$ in the diagonal boxes of the grid,
  top-left to bottom-right, so that each coincides with the
  comparability graph of their $xy$-dominance relation.  There are
  no dominance relations between the diagonal boxes, so we have a
  realization of their disjoint union.  Now place the vertex
  corresponding to $q$ below and to the left of the grid, completing
  the realization of $G'$ as the comparability graph of a
  2-dimensional dominance relation.

  For the remainder of this proof, we assume that $q$ is not the only
  witness.


  We first prove the following claim: the subgraph $G'_{P\cap W}$
  induced by vertex-witnesses in $G'$ is a collection of paths, and
  each path is a chain or an antichain in the dominance
  relation on $P$,
  i.e., the vertices along the path either have increasing $x$- and
  $y$-coordinates, or increasing $x$- and decreasing $y$-coordinates.

  Indeed, by construction and by our assumption that no two
  vertex-witnesses share $x$- or $y$-coordinates, $\bar H$ can contain
  at most two vertex-witnesses besides $q$.  Any such vertex-witness
  must lie at a corner of $\bar H$, and if there are two of them, they
  must occupy diagonally opposite corners of $\bar H$.  As already
  observed, $q$ is not adjacent in $G'$ to any vertex outside $\bar
  H$, hence it has degree at most two in $G'_{P\cap W}$ and if it does
  has two neighbors, they form a chain or an antichain with $q$ in the
  dominance relation on $P$.  The claim easily follows.

  Let $G_H$ ($G'_H$) be the subgraph of $G$ (respectively, $G'$)
  induced by the vertices in the open box $H$, i.e., by $q$ and the
  vertices in $B_{\mathit{I}}, \ldots, B_{\mathit{IV}}$.  Consider the
  union of the open vertical and horizontal strips defined by $H$; its
  complement is a union of (at most) four quadrants, which we refer to
  as \emph{the quadrants of $H$} and number in the usual manner,
  I~through~IV.  The quadrants contain all the witnesses besides~$q$,
  and we have assumed there exists at least one such witness.  As
  above, if there is a pair of adjacent quadrants containing
  witnesses, $G_H$ is complete and therefore $G'_H$ is the empty
  graph; in fact it is easy to check that $G_{\bar H}$ is complete,
  $G'_{\bar H}$ is empty, and therefore $q$ is an isolated vertex in
  $G'$ in this case.

  There remains the case that there is a pair of opposite quadrants,
  say I~and~III, one or both of which contain a witness.  Then
  $G_{B_{\mathit{II}}}$ and $G_{B_{\mathit{IV}}}$ are complete graphs
  and therefore $G'_{B_{\mathit{II}}}$ and $G'_{B_{\mathit{IV}}}$ are
  empty graphs; $q$ is not adjacent in $G'$ to any vertex in these two
  boxes.  On the other hand, $G'_{B_{\mathit{I}}}$ and
  $G'_{B_{\mathit{III}}}$ represent the dominance relation in
  $B_{\mathit{I}}$ and $B_{\mathit{III}}$, respectively, and $q$ is
  adjacent to every vertex in those two boxes.  If either of the
  quadrant~I or quadrant~III corners of $H$ is a vertex-witness, they
  are the neighbors of $q$ along its path in $G'_{P \cap W}$.

  To summarize, $G'$ decomposes into disjoint subgraphs of the
  form
  \[\langle K_0, q_1, K_1, q_2, \ldots, q_\ell, K_\ell \rangle,\]
  with $\ell \geq 0$, where each $K_i$ is a permutation graph
  (possibly with no vertices), each $q_i$ is a vertex-witness,
  $q_1,q_2,\ldots,q_\ell$ is a simple path in $G'$ (in fact, it is a chain or
  an antichain in the 2-dimensional dominance relation on $P$) and
  all vertices of $K_i$ are adjacent to $q_i$ and $q_{i+1}$, for
  $0<i<\ell$.
  %
  Vertices of $K_0$ are adjacent to $q_1$, vertices of $K_\ell$ are adjacent
  only to $q_\ell$.  There are no adjacencies between vertices of
  different $K_i$'s.

  The first part of the proof of the theorem is complete once we argue
  that each $\langle K_0, q_1, \ldots \rangle$ is isomorphic to the
  comparability graph of the dominance relation of some
  two-dimensional set of points; such a realization in depicted in
  figure~\ref{fig:chains} (Notice that this realization uses a set of
  points unrelated to $P$); as hence follows that the complement of all $\langle K_0, q_1, \ldots \rangle$ is a comparability graph.
\begin{figure}
  \centering
  \includegraphics{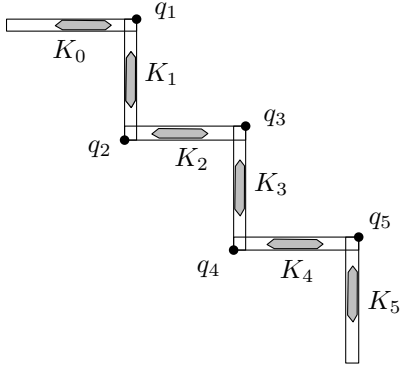}
  \caption{In most general case, a connected component of $G'$ is a
    comparability graph of a two-dimensional dominance relation.  The
    shaded regions represent two-dimensional dominance realizations of graphs
    $K_i$.}
  \label{fig:chains}
\end{figure}

\bigskip

Conversely, let $G=(V,E)$ be a comparability graph of dimension 2. 
Consider a set~$P$ of points
in the plane with no repeated coordinate values, whose dominance graph
(for $(x,y)$-coordinates) is isomorphic to $G$.  Points $p,q\in P$
are adjacent in $G$ if and only if $pq$ has positive slope.  If now
we place a single witness point $w$ to left and below~$P$, an edge
appears in $\SG^{+}(P,\{w\})$ if and only if the slope of $pq$ is
negative, so $\SG^{+}(P,\{w\})\cong G$.
\end{proof}


Recognizing whether a
combinatorial graph $G=(V,E)$ is a permutation graph can be done in
time $O(|V|+|E|)$ \cite{permutation99}. Combining this result with the
preceding theorem, we immediately obtain:

\begin{corollary}
  One can decide in $O(|V|+|E|)$  time whether a given
  combinatorial graph $G=(V,E)$ can be realized as a square graph
  $\SG^{+}(P,W)$ for some point sets $P,W$ in the plane.  If a
  realization exists, it can be constructed within the same time
  bounds.
\end{corollary}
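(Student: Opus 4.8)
The plan is to combine the structural characterization of Theorem~\ref{thm:squareStructural} with the linear-time permutation-graph recognition algorithm of \cite{permutation99}. By Theorem~\ref{thm:squareStructural}, a combinatorial graph $G=(V,E)$ admits a realization as a square graph $\SG^{+}(P,W)$ for some point sets $P,W$ if and only if $G$ is a permutation graph. Hence the decision problem is \emph{exactly} the problem of recognizing permutation graphs, which \cite{permutation99} solves in $O(|V|+|E|)$ time. This immediately settles the decision half of the statement, and it is the conceptual heart of the proof: the structural theorem converts a geometric realizability question into a purely combinatorial recognition question.

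For the constructive half, I would exploit the fact that the recognition procedure does more than return a Boolean answer: when $G$ is a permutation graph it also produces a \emph{permutation representation}, equivalently a realization of $G$ as the comparability graph of a two-dimensional dominance order, within the same $O(|V|+|E|)$ time bound. From such a representation one reads off, for each vertex, its rank in the two linear extensions; these two ranks serve directly as the integer $(x,y)$-coordinates of the corresponding point, yielding a point set $P$ whose $xy$-dominance graph is isomorphic to $G$ and whose coordinates are pairwise distinct. Placing a single witness $w$ below and to the left of $P$ then gives $\SG^{+}(P,\{w\}) \cong G$, precisely as in the converse direction of the proof of Theorem~\ref{thm:squareStructural}. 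Each of these steps---extracting the ranks, assigning coordinates, and inserting the lone witness---runs in linear time, so the realization is produced within the claimed bound.

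The one point requiring care, and the main obstacle, is verifying that the cited recognition algorithm indeed returns an explicit two-dimensional realization rather than merely accepting or rejecting $G$. Once this is granted, the remaining translation into point coordinates and the placement of the witness are entirely routine and clearly fit the $O(|V|+|E|)$ budget. Since modern permutation-graph recognition algorithms do output such a representation as a byproduct, the corollary follows directly from Theorem~\ref{thm:squareStructural} and \cite{permutation99}.
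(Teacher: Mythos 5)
Your proposal is correct and follows essentially the same route as the paper: the paper's proof is exactly the combination of Theorem~\ref{thm:squareStructural} with the $O(|V|+|E|)$ permutation-graph recognition algorithm of \cite{permutation99}, with the realization obtained (as you describe) from the two-dimensional dominance representation plus the single-witness construction from the converse direction of the theorem's proof. Your write-up merely makes explicit the point the paper leaves implicit, namely that the recognition algorithm also outputs a transitive orientation/permutation representation from which coordinates can be read off in linear time.
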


\section{Size of stabbing sets}\label{section:stabbing}

Consider a point set $P$ in the plane.  Let
$\S$ be a family of geometric objects with nonempty
interiors, each one associated to a finite subset of $P$.  We say
that a point $w$ \emph{stabs} an object $Q\in \S$ if $w$
lies in the interior of $Q$. In this section we consider the
problem of how many points are required to stab all the elements
of $\S$, which we denote by $st_{\S}(P)$, and how
large this number can be when all the point sets with $|P|=n$ are
considered. We denote this extremal value by
$st_{\S}(n)=\max_{|P|=n} st_{\S}(P)$. We will see
that these problems can be rephrased in terms of witness graphs,
and therefore the results from the previous sections be used for
their study. On the other hand, the more natural and interesting formulation
is in terms of Voronoi discrimination, a description that we present in
Subsection \ref{subsection:discrimination}.

Similar problems have already been considered for the family of all
convex polygons with vertices from among the points of $P$.  In
particular, if $\mathcal{T}$ and $\mathcal{Q}$ are the family of
triangles and convex quadrilaterals, respectively, with vertices in
the given point set, it has been proved that $st_{\mathcal{T}}(n)=2n-5$
and that $st_{\mathcal{Q}}(n)=2n-o(n)$ \cite{KM88,CKU99,SU07}. Those
families of shapes are finite, while the ones we consider here are
infinite and continuous: the disks and the isothetic squares whose
boundary contains two points from $P$. For these problems the stabbing
set can be viewed as a witness set that yields a specific type of
corresponding witness graph, a connection that allows us to use the
preceding results and that we make precise below.

\subsection{Stabbing disks}\label{subsection:stabbingCircles}

Let $P$ be a set of $n$ points, 
and let $\D$ be the set of disks whose boundary contains at least two
points from $P$.  If $p,q\in P$, a set of points $W$ stabs every disk
with $p$ and $q$ on its boundary if and only if $pq$ is not an edge of
$\DG^-(P,W)$.  In other words,
\[
  st_{\D}(n)=\max_{|P|=n} st_{\D}(P)=\max_{|P|=n} \min
  \{|W| : \DG^-(P,W)=\varnothing\}.
\]


\begin{lemma}
  \label{lem:disjointCircles}
  Let $P$ be a set of $n$ points and let $\D'$
  be a set of disks with pairwise disjoint interiors, such that the
  boundary of any of them contains two points from $P$. Then,
  $|\D'|\le n$, which is tight.
\end{lemma}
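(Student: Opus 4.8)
The plan is to prove the bound by a double-counting argument on incidences between the disks of $\D'$ and the points of $P$ lying on their boundaries, whose crux is a purely local observation: no point of $P$ can lie on the boundary of three disks of $\D'$. First I would record the key claim that at most two disks of $\D'$ pass through any single point $p\in P$. To see this, suppose $D_i,D_j\in\D'$ both have $p$ on their boundary; write $c_i,c_j$ for their centers and $r_i,r_j$ for their radii, so that $|c_i-p|=r_i$ and $|c_j-p|=r_j$. Since $D_i$ and $D_j$ have disjoint interiors, their centers satisfy $|c_i-c_j|\ge r_i+r_j$; on the other hand the triangle inequality gives $|c_i-c_j|\le |c_i-p|+|p-c_j|=r_i+r_j$. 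Equality forces $p$ to lie on the segment $c_ic_j$, strictly between the two centers (as $r_i,r_j>0$); equivalently, the rays $p\to c_i$ and $p\to c_j$ point in opposite directions.

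The claim then follows immediately: if a third disk $D_k\in\D'$ also passed through $p$ with interior disjoint from both $D_i$ and $D_j$, the same argument would make the ray $p\to c_k$ opposite to both $p\to c_i$ and $p\to c_j$. But $p\to c_i$ and $p\to c_j$ are already opposite to each other, and a single ray cannot be opposite to two mutually opposite rays. Hence at most two disks of $\D'$ meet at any point of $P$. I expect this local packing argument to be the main (and essentially the only) obstacle; once it is in place the counting is routine.

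With the claim established, I would finish by double counting the set of incident pairs $\{(D,p): D\in\D',\ p\in P,\ p\in\bd D\}$. By hypothesis each disk carries at least two points of $P$ on its boundary, so the number of such pairs is at least $2|\D'|$; summing instead over the points of $P$ and using the claim that each point lies on at most two boundaries, the number of pairs is at most $2n$. Combining, $2|\D'|\le 2n$, that is, $|\D'|\le n$.

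Finally I would verify tightness with a concrete construction realizing equality. Place $n$ points $v_1,\dots,v_n$ in convex position (e.g.\ as the vertices of a convex polygon) and, for each edge $v_iv_{i+1}$ (indices mod $n$), take a thin disk $D_i$ whose boundary passes through $v_i$ and $v_{i+1}$ and which bulges slightly outside the polygon along that edge. Choosing these disks thin enough that consecutive ones meet only (tangentially) at their shared vertex and non-consecutive ones are disjoint, one obtains $n$ disks with pairwise disjoint interiors, each with two points of $P$ on its boundary; this shows the bound $|\D'|\le n$ cannot be improved.
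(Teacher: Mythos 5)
Your proof is correct and takes essentially the same approach as the paper's: the key local claim that a point of $P$ can lie on the boundary of at most two interiorly disjoint disks (which the paper merely asserts, together with the forced tangency, and which you prove cleanly via the triangle inequality), followed by the same incidence count $2|\D'|\le 2n$, and a necklace-of-tangent-disks construction for tightness. Your tightness example is the paper's necklace viewed from the points' side (polygon vertices as tangency points); just note that "thin enough" is slightly loose wording, since disjointness of interiors forces exact tangency of consecutive disks at the shared vertex, which is achievable, e.g., for a regular polygon.
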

\begin{proof}
A point $p\in P$ can lie on the boundary of at most two interiorly
disjoint disks from $\D'$, and which would necessarily be
tangent at $p$. As every bounding circle contains two points from
$P$, $|\D'|\leq n$. This bound is achievable, for
example, in a necklace of disks in which each one touches its
two neighbors, with $P$ being the set of contact points.
\end{proof}

\begin{lemma} \label{lem:delaunayEdgesSuffice}
  Let $P$ be a set of $n$ points. 
  If none of the edges of $\DT(P)$ are in $\DG^{-}(P,W)$,
  then $\DG^-(P,W)=\varnothing$.
\end{lemma}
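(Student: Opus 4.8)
The plan is to prove the contrapositive: if $\DG^-(P,W)$ has at least one edge, then at least one edge of $\DT(P)$ lies in $\DG^-(P,W)$. By definition, each edge $ab$ of $\DG^-(P,W)$ comes with a witness-free open disk whose bounding circle passes through $a$ and $b$. I would attach to every such disk the number of points of $P$ lying in its interior, and then run an extremal-principle argument: among all witness-free open disks whose boundary passes through two points of $P$, pick one, call it $D$ with boundary points $a,b$, that contains the fewest points of $P$ in its interior. Such a minimum exists, since these counts are nonnegative integers and, by the contrapositive hypothesis, the family of such disks is nonempty.

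The core claim is that this minimal disk $D$ is empty of $P$, so that $ab$ is an edge of $\DT(P)$ (recall that $ab$ is a Delaunay edge of $P$ exactly when some circle through $a,b$ is empty of $P$); since $D$ is also witness-free, $ab$ then lies in $\DG^-(P,W)$, producing the desired edge of $\DT(P)$. To establish the claim I argue by contradiction: suppose some $r \in P$ lies in the interior of $D$. The closed disk $\bar D$ then contains both the boundary point $a$ and the interior point $r$, so Observation~\ref{obs:shrunken-disk} yields a disk $D_{ar} \subseteq \bar D$ whose boundary passes through $a$ and $r$. Since $a,r\in P$, this makes $ar$ a candidate edge of $\DG^-(P,W)$, provided $D_{ar}$ is witness-free.

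The main thing to verify is that $D_{ar}$ is witness-free and contains strictly fewer points of $P$ than $D$, contradicting minimality. In the shrinking construction of Observation~\ref{obs:shrunken-disk}, $D_{ar}$ is internally tangent to $\bar D$ at the first of $a,r$ to be reached, which is a point of $P$; consequently the interior of $D_{ar}$ is contained in the (witness-free) interior of $D$, so no witness can enter $D_{ar}$, and $ar \in \DG^-(P,W)$. For the count, every point of $P$ interior to $D_{ar}$ is interior to $D$, while $r$ has passed from the interior of $D$ to the boundary of $D_{ar}$ and no new interior point is created; hence $D_{ar}$ has at least one fewer interior point of $P$ than $D$. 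This contradicts the minimality of $D$, so $D$ must be empty of $P$, completing the argument.

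I expect the only delicate point to be the bookkeeping around open versus closed disks and boundary incidences: one must confirm, using general position (under which the circle through $a,b$ carries at most one further point of $P\cup W$), that the internal tangency of $D_{ar}$ with $\bar D$ occurs at a vertex point rather than at a witness, so that the shrinking genuinely preserves witness-freeness while strictly lowering the interior $P$-count.
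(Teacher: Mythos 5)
Your proof is correct, and its logical organization genuinely differs from the paper's, even though both rest on the same geometric idea of shrinking a witness-free disk while keeping it witness-free. The paper argues directly: for an \emph{arbitrary} pair $p,q\in P$ and an \emph{arbitrary} disk $D$ through them, if $pq\notin\DT(P)$ then $D$ contains a point of $P$, and shrinking $D$ by a homothety centered at $p$ until its interior is empty of $P$ yields a Delaunay edge $pp'$ of $\DT(P)$ whose (shrunken) disk lies inside $D$; by hypothesis that shrunken disk contains a witness, which therefore stabs $D$, so no disk through $p,q$ is witness-free. You instead prove the contrapositive via an extremal principle: among all witness-free disks with two points of $P$ on the boundary, a disk minimizing the number of interior points of $P$ must in fact be empty of $P$, since otherwise one application of Observation~\ref{obs:shrunken-disk} produces a witness-free disk (its interior being an open subset of $\bar D$, hence of the open disk $D$) with strictly fewer interior points. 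What your route buys is that it replaces the paper's continuity-flavored step (``shrink until the moment it contains no point of $P$,'' which implicitly needs a last point $p'$ to land on the boundary) with a purely discrete minimization over integer counts, invoking only the already-established Observation~\ref{obs:shrunken-disk}; what the paper's route buys is a stronger, more local statement --- every disk through a non-Delaunay pair contains an empty Delaunay disk --- proved without any global extremal selection. One remark: your closing worry about whether the tangency point of $D_{ar}$ with $\bd D$ is a vertex rather than a witness is unnecessary. Since $\DG^{-}$ is defined via \emph{open} disks, witnesses on the bounding circle are harmless, and witness-freeness of $D_{ar}$ follows purely from the containment $\mathrm{int}(D_{ar})\subseteq\mathrm{int}(\bar D)=D$; no general-position assumption is needed for that step.
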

\begin{proof}
  Let $p$ and $q$ be two points from $P$ and let $C$ be any circle
  through them.  If $p$ and $q$ are neighbors in $\DT(P)$, we know by
  hypothesis that $p\not\sim q$ in $\DG^-(P,W)$.  If $p$ and $q$ are
  not adjacent in $\DT(P)$, then there is at least one point from $P$
  in the interior of the disk $D$ bounded by $C$, and we can find a
  disk $D'$ contained in $D$, tangent to $C$ at $p$, that has a second
  point, say $p'$, from $P$ on its boundary but none interior.  ($D'$
  can be obtained, for example, by shrinking $D$ with center $p$ until
  the moment it contains no point of $P$ in its interior.)
  Therefore $pp' \in \DT(P)$.  Since $pp'$ is not an edge of
  $\DG^{-}(P,W)$, $D'$ must contain a witness point.  This witness
  stabs $D$ as well, therefore $p\not\sim q$ in $\DG^-(P,W)$.
\end{proof}

The preceding lemma implies that to stab all disks whose boundaries
contain pairs of points from $P$, it is enough to stab only the disks
corresponding to pairs of Delaunay neighbors.
This can easily be done by placing a witness point very close to the midpoint of each Delaunay edge, the witness point being external for the convex hull edges, yielding roughly a total of $3n$ witnesses. We show next a better upper bound.

\begin{theorem}  \label{thm:stabbingCircles}
For $n\geq 2$, $n\le st_{\D}(n)\le 2n-2.$
\end{theorem}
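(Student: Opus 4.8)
The plan is to establish the two inequalities separately. The lower bound $n \le st_{\D}(n)$ should follow directly from Lemma~\ref{lem:disjointCircles}: since there exist configurations of $n$ points (the necklace of tangent disks) supporting $n$ pairwise interior-disjoint disks each having two points of $P$ on its boundary, any stabbing set must place a distinct witness inside each of these $n$ disks, so at least $n$ witnesses are required. Thus $st_{\D}(P) \ge n$ for that configuration, giving $st_{\D}(n) \ge n$.

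For the upper bound, I would invoke Lemma~\ref{lem:delaunayEdgesSuffice}, which reduces the task to stabbing only the disks associated with Delaunay edges of $\DT(P)$. The goal is then to show that the Delaunay edges can be stabbed using at most $2n-2$ witnesses, improving on the naive $\sim 3n$ bound obtained by placing one witness per Delaunay edge. The key observation is that a single well-placed witness can simultaneously stab the empty disks of several Delaunay edges at once. Concretely, for each vertex $p \in P$, or for each Delaunay triangle, one can try to charge witnesses in an amortized fashion so that the total count comes out to roughly $2n$ rather than the edge count $3n-6$.

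The most natural approach to reach $2n-2$ is to associate witnesses with the faces (triangles) of the Delaunay triangulation rather than with its edges. For each bounded triangular face of $\DT(P)$, its circumscribing disk is empty of points of $P$ and its boundary passes through the three vertices; placing a single witness near the center of (or inside) such a circumcircle can stab the three Delaunay edges of that triangle simultaneously, provided the witness lies in the empty disks certifying each of those three edges. Since a triangulation of $n$ points has at most $2n-2-h$ bounded triangles (where $h$ is the number of convex hull vertices), and the $h$ convex hull edges must be handled separately by external witnesses, a careful accounting of interior triangles plus hull edges should yield a total of at most $2n-2$. The hard part will be verifying that a single witness per triangle can indeed be positioned to stab (shrink into) the certifying empty disks of all three incident Delaunay edges at once, and handling the convex hull edges—whose certifying disks are unbounded half-planes—without overcounting; this bookkeeping, and ensuring the witness placement respects general position, is where the main obstacle lies.
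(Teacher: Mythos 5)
Your lower bound is exactly the paper's argument, and your global accounting for the upper bound (one witness per Delaunay triangle plus one external witness per hull edge, giving $(2n-2-h)+h = 2n-2$) is also the paper's scheme. But there is a genuine gap in what you ask the per-triangle witness to do, and it is precisely the step you flag as ``the hard part.'' To delete an edge $pq$ from $\DG^-(P,W)$ you must stab \emph{every} disk whose bounding circle passes through $p$ and $q$ --- a full one-parameter pencil of disks, not just the circumdisk of the triangle or some finite set of ``certifying'' disks. A single witness not lying on the open segment $pq$ can never stab this whole pencil: the parameter values $t$ (center position along the perpendicular bisector) for which the disk $D(t)$ contains a fixed point $w$ form an open half-line, since $|w-c(t)|^2-|p-c(t)|^2$ is affine in $t$. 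So no placement of one witness per triangle can ``stab the three Delaunay edges simultaneously'' on its own; the argument must be cooperative, with the two witnesses on the two sides of each Delaunay edge (the witnesses of the two incident triangles, or a triangle witness plus an external hull witness) jointly covering the pencil.

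The missing idea that makes this work is the paper's angle condition: place the witness $p_T$ inside $T$ so that it sees every side of $T$ at an angle \emph{greater than} $\pi/2$ (e.g., on an interior altitude, very close to its foot). By the inscribed angle theorem, if $p_T$ sees $pq$ at an angle exceeding $\pi/2$, the circle through $p$, $q$, $p_T$ has its center on the \emph{opposite} side of the line $pq$, so $p_T$ stabs all disks of the pencil with center on its own side, the diametral disk, and a bit beyond; two such witnesses on opposite sides of $pq$ therefore have overlapping half-pencils and cover everything. Without this condition the two half-pencils can leave a gap of unstabbed disks around the diametral disk. Note also that your tentative placement ``near the center of the circumcircle'' fails in general: the circumcenter may lie outside $T$ (obtuse triangles), and even when it lies inside, the angle it subtends over the side opposite a vertex with angle $\alpha$ is the central angle $2\alpha$, which is less than $\pi/2$ whenever $\alpha<\pi/4$, exactly the situation in which a gap appears.
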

\begin{proof}
  The lower bound comes from the existence of sets, as shown in
  Lemma~\ref{lem:disjointCircles}, that admit $n$ disks with pairwise
  disjoint interiors, each containing two of the points on its
  boundary, because each disk requires a distinct stabbing witness.
  For the upper bound we place a witness $p_T$ inside each Delaunay
  triangle $T$ in $\DT(P)$, in such a way that $p_T$ sees every side
  of $T$ with an angle greater than $\pi/2$ (for example, one may
  place $p_T$ on an internal height, very close to its foot).  We also
  place a witness for every edge of the convex hull, external and very
  close to its midpoint. In this way every disk having a Delaunay edge
  from $\DT(P)$ as a chord will be stabbed at least on one side of the
  edge.  If the size of the convex hull is $h$, the number of
  triangles in $\DT(P)$ is $2n-h-2$, therefore the total number of
  witnesses we have placed is $2n-2$, as claimed.
\end{proof}


We don't believe the upper bound on the previous theorem to be
tight. We have obtained at least a better bound for points in
convex position:

\begin{proposition} \label{prop:stabbingCirclesConvex}Let $P$ be a set of $n$ points in convex position,
  then $st_{\D}(P)\le \frac{4}{3}n$.  In other words, a suitable set
  $W$ of at most $\frac{4}{3} n$ witnesses is always sufficient to have
  $\DG^-(P,W)=\varnothing$.
\end{proposition}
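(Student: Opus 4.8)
The plan is to exploit the convex-position hypothesis together with Lemma~\ref{lem:delaunayEdgesSuffice}, which reduces the task to stabbing only the disks whose boundaries pass through pairs of Delaunay neighbors. For $n$ points in convex position, the Delaunay triangulation $\DT(P)$ is an internal triangulation of the convex $n$-gon: it has exactly $n$ convex-hull edges, $n-3$ internal diagonals, and $n-2$ triangles. The naive bound of Theorem~\ref{thm:stabbingCircles} would place one witness per triangle plus one per hull edge, giving roughly $2n$ witnesses; the improvement to $\frac{4}{3}n$ must come from \emph{sharing} witnesses among several Delaunay edges, so that a single well-placed witness stabs all the required disks on several edges at once. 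Concretely, I would look for a witness that lies close to a shared vertex or close to the common boundary of two adjacent triangles and that simultaneously sees several incident Delaunay edges at an angle exceeding $\pi/2$; each such edge $pq$ is then stabbed because the witness lies in the disk having $pq$ as a chord on one side.

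The key structural step is to set up a charging or covering argument on the triangulation. First I would recall that a witness placed at a point seeing an edge $pq$ under an angle $>\pi/2$ lies inside every disk for which $pq$ is a chord on that side, hence stabs all such disks; so the problem becomes: cover all $2n-3$ Delaunay edges (hull edges plus diagonals) by ``obtuse-angle'' witnesses, minimizing their number. Second, I would analyze how many Delaunay edges a single witness can cover. Placing a witness very near a vertex $v$ of degree $d$ in $\DT(P)$, it can be made to see all $d$ edges incident to $v$ under large angles, so one witness handles all edges at $v$. This suggests a \emph{vertex-cover–type} strategy: choose a set $S$ of vertices such that every Delaunay edge is incident to some $v\in S$, and place one witness near each chosen vertex (plus, for hull edges, possibly an external witness as in the proof of Theorem~\ref{thm:stabbingCircles}). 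The size bound $\frac{4}{3}n$ should then follow from a bound on the minimum such covering set in a planar triangulation of a convex polygon, where the average vertex degree is below $6$ and one can select a dominating/covering set of the edges of size about $\frac{2}{3}n$, each contributing roughly two witnesses (one internal, one for a hull edge if needed), or from a direct counting argument balancing triangles against shared vertices.

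The hardest part will be making the angle/geometry condition and the combinatorial covering bound consistent simultaneously: it is easy to place a witness near a vertex to cover all incident \emph{diagonals}, but the convex-hull edges are delicate, since a witness seeing a hull edge under an angle $>\pi/2$ must lie outside the hull, and such external witnesses cannot also be shared with many internal edges. I expect the real content of the $\frac{4}{3}n$ bound to lie in handling the $n$ hull edges efficiently—perhaps by grouping consecutive hull edges and showing that the circles through consecutive hull-edge pairs can be stabbed in batches, or by exploiting that along the convex chain the ``outer'' disks overlap so that one external witness stabs the disks of several consecutive hull edges. Quantifying exactly how much sharing is possible, and verifying that the geometric realizability (the angle condition and emptiness of the stabbing disks) holds throughout, is where the careful case analysis and the constant $\frac{4}{3}$ will come from.
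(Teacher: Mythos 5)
Your high-level skeleton is essentially the paper's: reduce to Delaunay edges via Lemma~\ref{lem:delaunayEdgesSuffice}, then share witnesses by choosing a set of vertices that covers every edge of $\DT(P)$ and spending two witnesses per chosen vertex, for a total of $2\cdot\frac{2}{3}n=\frac{4}{3}n$. However, two of your concrete steps have genuine gaps. First, the claim that ``one witness handles all edges at $v$'' is false, and not only for hull edges: no single point lies in every disk through two fixed points $v$ and $p$, because the pencil of disks through $v,p$ has the two half-planes bounded by the line $vp$ as limits, so any witness strictly on one side of that line misses the disks sufficiently far on the other side. A witness near $v$ seeing each incident edge at angle $>\pi/2$ therefore kills only the disks on \emph{its} side of each edge; the other side of every incident diagonal remains unstabbed, and your suggested allocation (one internal witness plus one external witness per hull edge) does not repair this, since both witnesses can land on the same side of the line $vp$ for some neighbor $p$. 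Second, the covering bound: ``average degree below $6$'' does not yield a vertex cover of size $\frac{2}{3}n$ (planar graphs in general need covers of size up to $\frac{3}{4}n$); what makes $\frac{2}{3}n$ work here is that convex position forces $\DT(P)$ to be a triangulation of a polygon with no interior vertices, which is $3$-colorable by Fisk's theorem, and the two smallest color classes form the desired cover.

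The missing geometric idea, which is where the paper's proof actually lives, is a \emph{pair} criterion rather than your per-witness obtuse-angle criterion. For each covered vertex $v$, the paper places both witnesses $w_1,w_2$ just \emph{outside} the hull, flanking $v$ near its two incident hull edges, and so close to $v$ that the disk through $w_1,v,w_2$ contains no vertex of $P$. Because every other point $p\in P$ lies in the interior cone at $v$ (angle $<\pi$, by convexity), $w_1$ and $w_2$ lie on opposite sides of every line $vp$; since $p$ is outside the disk $D_{w_1vw_2}$ while the segment $vp$ crosses its interior, one gets $\measuredangle vw_1p+\measuredangle vw_2p>180^\circ$, and the standard inscribed-angle (Delaunay flip) condition then guarantees that \emph{every} disk through $v$ and $p$ contains $w_1$ or $w_2$ --- even though neither witness individually need see $vp$ at an obtuse angle. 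This joint condition is what lets two witnesses kill all edges at $v$ simultaneously, and it is exactly the step your proposal defers to ``careful case analysis.''
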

\begin{proof}
  Recall from Lemma~\ref{lem:delaunayEdgesSuffice} that to eliminate
  all edges in a witness Delaunay graph of a set of points $P$, it is
  sufficient to eliminate the edges of the Delaunay triangulation of
  $P$.  Color the vertices of the Delaunay triangulation with three
  colors \cite{3colorsFisk}, white, gray and black. Pick the color that covers the largest
  number of vertices, suppose this is color is gray. For each vertex $v$ of
  color black or white, and its two incident edges on the convex hull $va$ and
  $vb$, put a witness $w_1$ outside of $\CH(P)$ very close to $v$ and
  $va$, and another witness $w_2$, outside of $\CH(P)$, and very close
  to $v$ and $vb$ (see Figure~\ref{WDGverticesConvex4}).
  \begin{figure}[htbp!]
    \centerline{\includegraphics{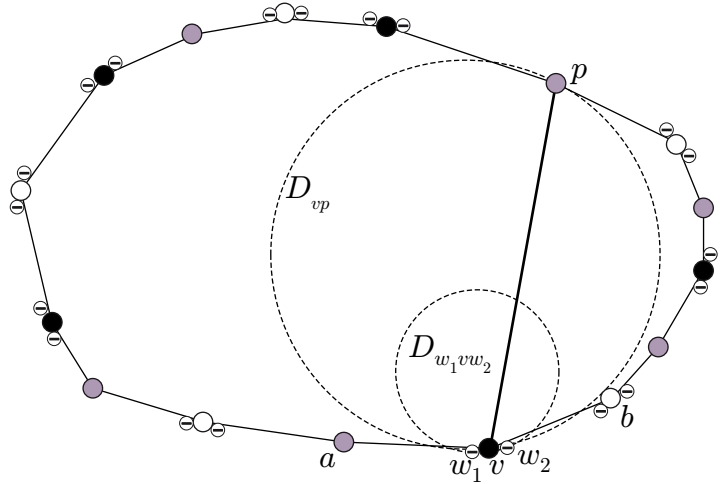}}
    \caption{Illustration for the proof of Proposition~\ref{prop:stabbingCirclesConvex}}
    \label{WDGverticesConvex4}
  \end{figure}
  The witnesses $w_1$ and $w_2$ are close enough to $v$ so that the
  disk $D_{w_1 v w_2}$ defined by $w_1$, $v$, and $w_2$ is empty of
  vertices.
  Consider any Delaunay edge $vp$ incident to $v$.  By construction,
  $p$ is outside of $D_{w_1vw_2}$ and $vp$ intersects the interior of
  $D_{w_1vw_2}$.  Therefore $\measuredangle vw_1p + \measuredangle
  vw_2p > 180^\circ$ and there is no disk with $v$ and $p$ on its
  boundary that is empty of witnesses.

  As at most $\frac{2}{3}n$ vertices are surrounded by two witnesses,
  $\frac{4}{3}n$ witnesses are sufficient to remove all the edges in
  the Delaunay triangulation of $P$, and the claim follows.
\end{proof}

\subsection{Stabbing squares}\label{subsection:stabbingSquares}

Let $P$ be a set of $n$ points, such that no two of them have
equal abscissa or ordinate, and let $\S$ be the set of
isothetic squares whose boundary contains two points of $P$.
Recall that $\SG^-(P,W)$ is the negative witness square graph of
$P$ with respect to $W$, in which two points $p$ and $q$ from $P$
are adjacent if and only if there is a square that has $p$ and $q$
on its boundary but covers no point from $W$. Equivalently,
$\SG^-(P,W)$ is the Delaunay graph of $P$ with respect to $W$ for
the $L_{\infty}$ metric.

If $p,q\in P$, a set of points $W$ stabs all the
squares whose boundary contains $p$ and $q$ if and only if $pq$ is not
an edge of $\SG^-(P,W)$.  Hence we see that
\[
  st_{\S}(n)=\max_{|P|=n} st_{\S}(P)=\max_{|P|=n} \min
  \{|W| : \SG^-(P,W)=\varnothing\}.
\]
The extrema are taken over pairs of sets $(P,W)$ so that $P
\cup W$ is in general position, i.e., with no two distinct points on
the same vertical or the same horizontal line.

\begin{lemma} \label{lem:disjointSquares}
  There is a set $P$ of $n$ points, no two of them with equal abscissa
  or ordinate, that admits a set of $\frac{5}{4}n-\Theta(\sqrt{n})$
  squares with pairwise disjoint interiors, each with two points of
  $P$ on its boundary.
\end{lemma}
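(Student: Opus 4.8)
The plan is to construct, for each $n$, a configuration of points whose horizontal and vertical spacings are all distinct, arranged essentially as a grid, together with an explicit packing of axis-aligned squares, each anchored on two points of $P$, whose interiors are pairwise disjoint. The natural model is to take $P$ to be (a perturbation of) a $\sqrt{n}\times\sqrt{n}$ integer grid, so that $|P|=n$, and then to exploit the fact that in the $L_\infty$ metric the ``empty square'' structure is rich: each grid point can serve as a corner or midpoint anchor of several small axis-aligned squares. First I would describe the packing on the clean grid (ignoring the general-position requirement), count the squares, and only at the end perturb the coordinates to make all abscissae and ordinates distinct without destroying disjointness or the incidence of two points on each square's boundary.

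The counting is where the constant $\tfrac54$ must come from, so the key step is to identify a local pattern that yields, on average, $\tfrac54$ disjoint squares per point. The most promising scheme is to tile the grid by $2\times 2$ blocks of four points and place in each such block a packing of squares that, amortized, contributes $5$ squares per $4$ points. Concretely, inside each unit cell of the grid one can inscribe one square using two diagonally opposite grid points on its boundary; this already gives roughly $n$ squares (one per cell), contributing disjoint interiors cell-by-cell. To push past $n$ toward $\tfrac54 n$, I would add, in a periodic subset of cells (say one cell in every $2\times2$ block of cells), an extra square at a different scale or offset that fits in the gaps left by the first family while still being anchored on two points of $P$ and still interior-disjoint from all previously placed squares. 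The bookkeeping must show that the two families together use $\tfrac54 n$ squares up to boundary effects, which accounts for the $\Theta(\sqrt n)$ loss: cells along the boundary of the grid cannot host the full local pattern, and there are $\Theta(\sqrt n)$ of them.

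The main obstacle I expect is simultaneously guaranteeing three things: that every square has exactly (at least) two points of $P$ on its boundary, that the interiors are pairwise disjoint, and that the density reaches $\tfrac54$ rather than merely $1$. Interior-disjointness across the two families is the delicate part, since the ``extra'' squares are precisely those squeezed into regions near the first family; I would resolve this by choosing the extra squares to be axis-aligned squares whose interiors lie strictly inside cells where no first-family square's interior intrudes, using the slack between an inscribed square and its cell. A clean way to organize the argument is to exhibit one explicit fundamental domain (a small constant-size block of grid points) carrying exactly $5$ interior-disjoint anchored squares on $4$ points, verify the pattern tiles the interior of the grid by translation, and then bound the number of incomplete blocks touching the grid boundary by $\Theta(\sqrt n)$.

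Finally, the general-position perturbation is routine but must be handled with care: I would perturb the $i$-th column's $x$-coordinate and the $j$-th row's $y$-coordinate by distinct infinitesimals (e.g. add $\varepsilon\cdot i$ and $\varepsilon\cdot j$ with $\varepsilon$ small), noting that axis-aligned squares and the boundary-incidence and interior-disjointness conditions are stable under sufficiently small perturbations, since all the relevant inequalities are strict. This yields a set $P$ of $n$ points with no two sharing an abscissa or ordinate that admits $\tfrac54 n-\Theta(\sqrt n)$ pairwise interior-disjoint squares, each with two points of $P$ on its boundary, as claimed.
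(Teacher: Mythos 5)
There is a genuine gap, and it sits exactly at the step you flag as delicate: the density-$\tfrac54$ gadget does not exist on a uniform grid. The problem is that any axis-aligned square having two points $p,q$ on its boundary must contain the axis-aligned bounding box $B(p,q)$, so its side is at least $\max\{|x(p)-x(q)|,|y(p)-y(q)|\}$. On a unit grid (even after infinitesimal perturbation) this means every anchored square has side at least $1-o(1)$; in particular, the \emph{smallest} axis-aligned square through two diagonally opposite corners of a unit cell is the cell itself, so the ``slack between an inscribed square and its cell'' that you plan to exploit is exactly zero. Your first family (essentially the cells) tiles the grid region with no gaps, and every additional anchored square, having side $\geq 1-o(1)$ and containing the bounding box of its two anchors, must have interior meeting the interior of some cell --- except for the $O(\sqrt{n})$ squares anchored on two boundary grid points that stick entirely outward. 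Hence no periodic pattern of ``extra'' squares in $2\times 2$ blocks can be added: the whole uniform-grid scheme is capped at $n+O(\sqrt{n})$ pairwise interior-disjoint anchored squares, strictly short of $\tfrac54 n-\Theta(\sqrt{n})$. (An area count says the same thing: each anchored square has area $\geq 1$, and they must essentially pack inside the grid's bounding box of area $\approx n$.)

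What makes the paper's construction work is precisely what a uniform grid cannot provide: empty strips that are not ``paid for'' by points but are still bridgeable by anchored squares, because points on opposite sides of a strip lie at distance equal to the strip's width. The paper takes $t$ horizontal rows, each a row of $t$ unit ``basic squares''; in each basic square it places a point at the center and points very close to the four edge midpoints (side midpoints shared between horizontal neighbors, so a row has $4t+1$ points). Each basic square then carries its four quarter-squares of side $\tfrac12$, each anchored on the center point and a midpoint point --- already one square per point, amortized. The gain comes from placing consecutive rows at vertical distance slightly \emph{less} than $\tfrac12$: each of the $t-1$ gaps hosts $t$ extra connecting squares of side about $\tfrac12$, anchored on a bottom-midpoint point of the upper row and a top-midpoint point of the lower row, disjoint from everything else because they live inside the gap. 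This yields $n=4t^2+t$ points and $4t^2+(t^2-t)=5t^2-t=\tfrac54 n-\Theta(\sqrt{n})$ squares. So the missing idea in your write-up is the two-scale, non-uniform placement (clusters of five points at half-spacing, plus sub-unit gaps between rows); your perturbation and boundary-effect bookkeeping are fine but cannot rescue the counting.
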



\begin{figure}[htbp!]
  \centering
  \includegraphics[scale=1]{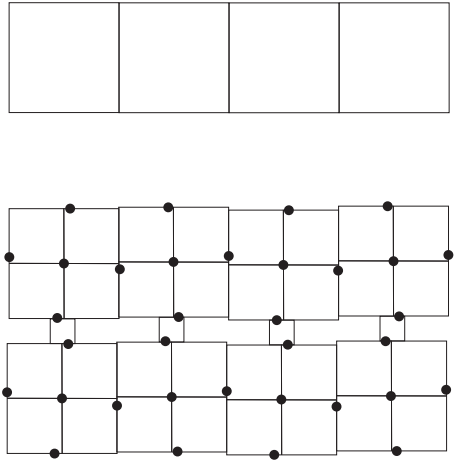}
  \caption{Top: Initial row of basic squares.  Bottom: Two rows of
    basic squares, after perturbation, subdivision, and point insertion,
    are connected by smaller squares.}
  \label{fig:squaresLower}
\end{figure}

\begin{proof}
  Consider a horizontal row of $t$ equal size basic squares each
  sharing vertical sides with its neighbors
  (Figure~\ref{fig:squaresLower}, top).  We apply a different
  infinitesimal vertical translation to each square, and then
  subdivide it into four equal squares; one point is placed at the
  center and four other points very close to the midpoints of the
  initial square edges, as shown in Figure~\ref{fig:squaresLower}.  The
  inserted points are shown in solid, and the union of all of them
  will form the desired set~$P$.

  We place $t$ copies of this construction nearly covertically, but
  applying different slight horizontal shifts to each row, ensuring that no two
  points of $P$ get equal $x$ or $y$ coordinates.  Any two consecutive
  rows are at distance slightly smaller than half the side of the
  original basic square, and we place $t$ connecting squares between
  the two rows, each touching two points of $P$, as in the figure.

  The point set $P$ constructed in this way has a total of $n=4t^2+t$
  points and admits a set of $5t^2-t=\frac54 n-\Theta(\sqrt n)$
  squares with pairwise disjoint interiors, each one with two points
  from $P$ on its boundary.
\end{proof}

\begin{figure}[htbp!]
  \centering
  \includegraphics[scale=1]{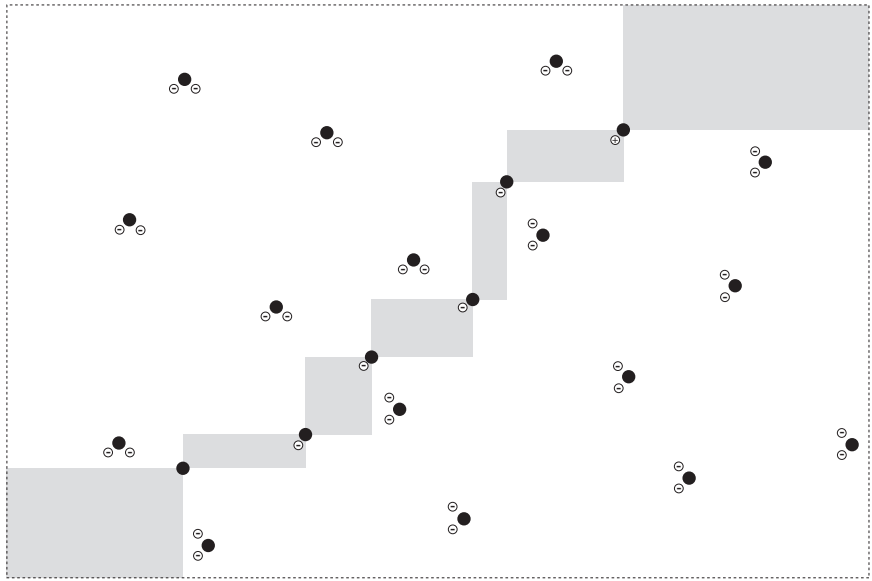}
  \caption{Illustration for the proof of
    Theorem~\ref{thm:stabbingSquares}.}
  \label{fig:rectangleSufficient}
\end{figure}

\begin{theorem}
  \label{thm:stabbingSquares}
  The function $st_S(n)$ satisfies $\frac54n-\Theta(\sqrt{n})\le
  st_S(n)\le 2n-\Theta(\sqrt{n})$.
\end{theorem}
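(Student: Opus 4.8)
The plan is to prove the two-sided bound on $st_S(n)$ separately. The lower bound $\frac54 n - \Theta(\sqrt n) \le st_S(n)$ is immediate from Lemma~\ref{lem:disjointSquares}: that lemma exhibits a point set $P$ of $n$ points admitting $\frac54 n - \Theta(\sqrt n)$ squares with pairwise disjoint interiors, each having two points of $P$ on its boundary. Since distinct interior-disjoint squares cannot share a common stabbing witness (a stabbing point lies in the \emph{interior} of the square it stabs, and interiors are disjoint), any stabbing set $W$ must contain at least one distinct point per square, forcing $|W| \ge \frac54 n - \Theta(\sqrt n)$. This handles the left inequality with essentially no further work.

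The substance is the upper bound $st_S(n) \le 2n - \Theta(\sqrt n)$. Here I want to exhibit, for \emph{every} point set $P$ of size $n$, a witness set $W$ of size at most $2n - \Theta(\sqrt n)$ with $\SG^-(P,W) = \varnothing$, i.e., a set stabbing every isothetic square with two points of $P$ on its boundary. The natural approach, mirroring the disk case in Theorem~\ref{thm:stabbingCircles}, is to reduce to the edges of the $L_\infty$-Delaunay triangulation $\SG^-(P,P)$: it should suffice to stab only the squares corresponding to $L_\infty$-Delaunay neighbors, by an argument analogous to Lemma~\ref{lem:delaunayEdgesSuffice} (shrink any empty-but-for-a-chord square toward one endpoint until a Delaunay edge is exposed, whose stabbing witness then also stabs the original square). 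Once we are reduced to Delaunay edges, I would place witnesses near the midpoints of these edges, and then the task becomes a \emph{counting} argument: bound the number of witnesses needed as a function of the number of Delaunay edges and the boundary structure.

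The key to getting $2n - \Theta(\sqrt n)$ rather than a plain $2n$ or $3n$ lies in exploiting the geometry of the bounding rectangle of $P$ (suggested by Figure~\ref{fig:rectangleSufficient}). The idea is that points near the boundary of the enclosing axis-aligned rectangle, or along the ``staircase'' convex-position structure with respect to the $L_\infty$ metric, admit a more economical stabbing: a single externally placed witness can simultaneously kill several Delaunay edges, and the number of edges one must individually handle is smaller than the naive count by a $\Theta(\sqrt n)$ term. Concretely, I expect to partition the Delaunay edges into interior edges and ``hull-type'' edges, charge two witnesses per triangle as in the disk proof to get a bound like $2n - h - 2$ plus boundary corrections, and then argue that the savings from the rectangle's four sides and corners accumulate to $\Theta(\sqrt n)$ — the precise extremal configuration being the grid-like construction of Lemma~\ref{lem:disjointSquares}, which shows the $\Theta(\sqrt n)$ correction term is the right order.

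The main obstacle I anticipate is making the $\Theta(\sqrt n)$ savings rigorous and uniform over \emph{all} point sets, not just the grid example. Unlike the disk case, where the clean Euler-type count $2n - h - 2$ triangles gave $2n-2$ directly, here I must show that \emph{no matter how} $P$ is arranged one can always find $\Theta(\sqrt n)$ redundancies — that is, $\Theta(\sqrt n)$ witnesses each doing double duty or $\Theta(\sqrt n)$ squares that are stabbed ``for free'' by witnesses placed for other squares. This likely requires a careful analysis of how the $L_\infty$-Delaunay triangulation interacts with the four strips adjacent to the bounding rectangle, and a packing or charging argument showing that the boundary cannot contribute more than $\Theta(\sqrt n)$ worth of independent stabbing obligations. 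Establishing this lower-order term tightly, and matching it to the construction, is where the real care is needed; the leading $2n$ term itself should follow routinely from the reduction to Delaunay edges combined with the two-witnesses-per-triangle scheme.
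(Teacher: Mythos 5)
Your lower bound is fine and matches the paper: Lemma~\ref{lem:disjointSquares} plus the observation that interior-disjoint squares need distinct stabbing points. The gap is in the upper bound, and it is exactly the part you flag as an ``anticipated obstacle'': your plan contains no actual mechanism for the $\Theta(\sqrt{n})$ savings. The route you sketch --- reduce to $L_\infty$-Delaunay edges and charge two witnesses per triangle as in Theorem~\ref{thm:stabbingCircles} --- can only yield $2n-O(1)$, and your hope that ``the savings from the rectangle's four sides and corners accumulate to $\Theta(\sqrt{n})$'' is unfounded: there is no reason the hull or boundary structure of an arbitrary $n$-point set has $\Theta(\sqrt{n})$ complexity (it can be constant-size), and the extremal grid of Lemma~\ref{lem:disjointSquares} tells you nothing about what \emph{every} point set admits, since the upper bound must be uniform over all configurations.

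The paper's actual argument uses two ideas absent from your proposal. First, the reduction is not to Delaunay edges but to \emph{boxes}: any axis-aligned square with $p$ and $q$ on its boundary contains the rectangle $B(p,q)$ with $p,q$ as opposite corners, so it suffices to stab all $\binom{n}{2}$ such rectangles --- a simpler and stronger target than the Delaunay-edge reduction. Second, and crucially, the $\sqrt{n}$ savings comes from a global combinatorial fact, not from boundary geometry: by Dilworth's theorem (equivalently Erd\H{o}s--Sz\'ekeres), every $n$-point set contains a \emph{maximal} monotone chain $P'$ of at least $\sqrt{n}$ points; maximality forces the staircase boxes between consecutive chain points to be empty of $P$, and then one witness per staircase box handles the chain points, while each point of $P\setminus P'$ gets two witnesses placed in appropriate quadrants relative to the staircase. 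The count is $2|P\setminus P'|+|P'|+O(1)\le 2n-\sqrt{n}+O(1)$: chain points cost one witness instead of two, and the chain can wind through the interior of the set, so no analysis of the bounding rectangle is needed or would suffice. Without this (or some equivalent) idea, your proof of the upper bound does not go through.
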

\begin{proof}
The lower bound follows from the preceding lemma.  We show that
$2n-\Theta(\sqrt{n})$ witness stabbing points are always
sufficient. Notice that a square containing two points always
contains the rectangle they define as opposite corners: We
prove a stronger claim, namely, that
$2n-\Theta(\sqrt{n})$ points are always sufficient for stabbing
the rectangles such that two opposite corners belong to a given
set $P$ of $n$ points. Using Dillworth's theorem for partially
ordered sets (or Erd\H os-Sz\'{e}keres theorem for sequences) we get a
maximal subset $P'$ of $P$ of at least $\sqrt n$ points with
increasing $x$, such that their ordinates strictly decrease or
strictly increase; we assume the latter without loss of
generality. Consider the boxes that have as opposite corners
consecutive points in this sequence (adding points
$(-\infty,-\infty)$ and $(+\infty,+\infty)$).  The interiors of
these boxes, shown shaded in Figure~\ref{fig:rectangleSufficient},
cannot contain any other point from $P$ because of the maximality
of $P'$.

Let $\varepsilon_x$ and $\varepsilon_y$ be the minimum gap between the
$x$-coordinates and the $y$-coordinates of the points in $P$,
respectively, and define $\varepsilon\mathop{:=}\min \{\varepsilon_x,
\varepsilon_y\}/3$, which is by assumption a positive number.

We put a witness inside every finite shaded box, namely at
position $(x-\varepsilon, y-\varepsilon)$, if $(x,y)$ is the upper
right corner of the box.  For every point $(x,y)\in P\setminus P'$
in the upper bay we put witnesses in its relative third and fourth
quadrant, at positions $(x-\varepsilon, y-\varepsilon)$ and
$(x+\varepsilon, y-\varepsilon)$.  Finally, for every point
$(x,y)\in P\setminus P'$ in the lower bay we put witnesses in its
relative second and third quadrant, at positions $(x-\varepsilon,
y+\varepsilon)$ and $(x-\varepsilon, y-\varepsilon)$ (see Figure~\ref{fig:rectangleSufficient}).
In this way any rectangle with
two opposite corners in $P$ is stabbed, and the total number of
used witnesses is at most $2n-\sqrt n$.
\end{proof}

\begin{figure}[htbp!]
  \centering
  \includegraphics[scale=1]{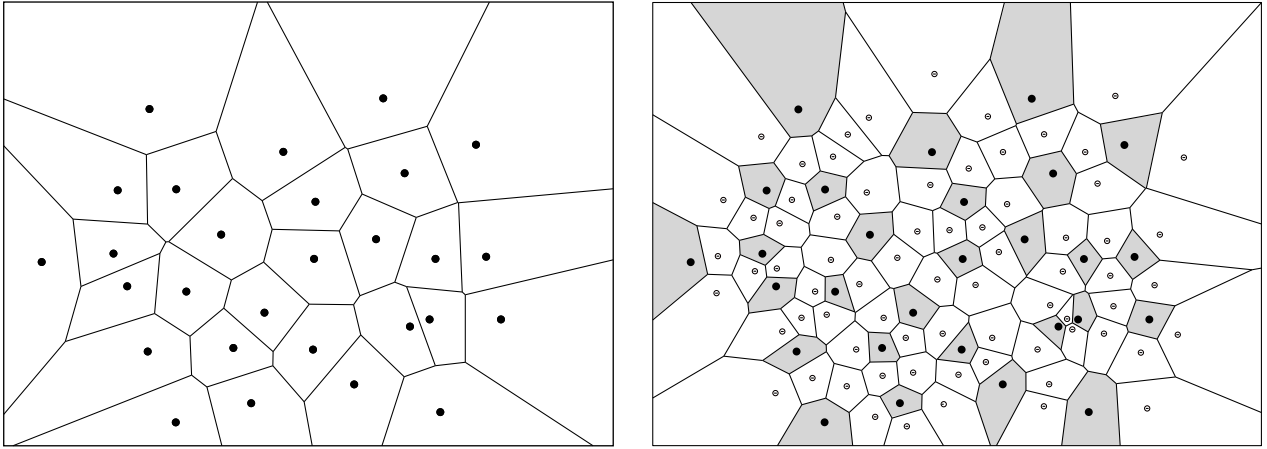}
  \caption{The witnesses (right) prevent original points (left) from being Voronoi neighbors with the metric $L_2$.}
  \label{fig:redAndBlueL_2}
\end{figure}

\subsection{Voronoi discrimination}\label{subsection:discrimination}

Given a set $P$ of $n$ ``black'' points in the plane, how many ``white'' points are
needed in the worst case to completely separate the Voronoi regions of the
black points from each other? Observe that this problem is precisely the one we have
been considering throughout this section, as can be formulated in terms of finding a
set $W$ of witnesses (the white points) such that $\DG^-(P,W)=\varnothing$, with the Euclidean metric (Figure~\ref{fig:redAndBlueL_2}),
or that $\SG^-(P,W)=\varnothing$, when the $L_{\infty}$ metric is considered (Figure~\ref{fig:redAndBlueL_infinity})

This interesting discrimination problem seems potentially useful in
several applications.  However, to the best of our knowledge, this
problem had not been explored before, either from the combinatorial
viewpoint, or from the viewpoint of computation.  Various related problems
without satisfactory solutions exist as well, for example, finding
placements for points such that their Voronoi regions will cover
maximal area \cite{DHKS}, delineating boundaries \cite{T03}, or
competing for area as modeled by a two-players game \cite{ACC02}.

\begin{figure}[htbp!]
  \centering
  \includegraphics[scale=1]{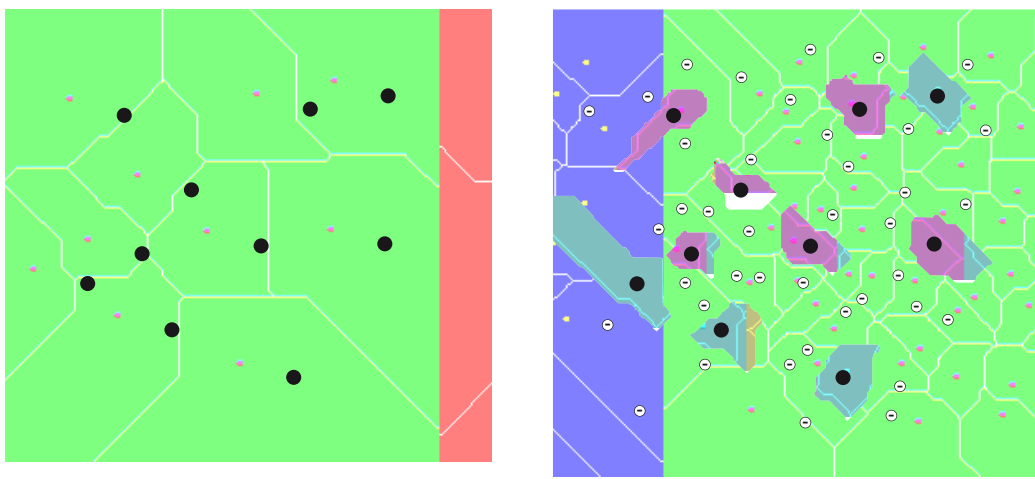}
  \caption{The witnesses (right) prevent original points (left) from being Voronoi neighbors with the metric $L_{\infty}$.}
  \label{fig:redAndBlueL_infinity}
\end{figure}


%
%


\section{Concluding remarks}\label{section:conclusion}

We have introduced in this paper the generic concept of witness graphs and
described several properties and computation algorithms for two specific examples,
one with negative witnesses and Euclidean metric balls as interaction regions,
another one using isothetic squares, the $L_{\infty}$ balls, and positive witnesses.

Several open problems remain. In particular, we have characterized some graphs that
can be realized as witness Delaunay graphs, and some others that cannot. A complete
combinatorial characterization would certainly be desirable.
Closing the gaps between the bounds in Theorem~\ref{thm:stabbingCircles} and
Theorem~\ref{thm:stabbingSquares} on the maximum number of witnesses needed to eliminate
all edges in a witness Delaunay graph, and a square graph, respectively, also seem
to us interesting problems on the combinatorial side.

As for algorithms, it can be easily proved that designing an \emph{output-sensitive}
algorithm for constructing a witness Delaunay graph with $k$ edges has a lower bound complexity
$\Omega(k+n \log n)$, given its set of $n$ vertices and witnesses, yet the most efficient algorithm
we have found has running time, $O(k \log n + n \log^2 n)$,
hence there is still a complexity gap to be resolved.

However, we consider that the most prominent issue in this regard is
that we have not obtained any complexity results on computing an
optimal discriminating set of witnesses for a given point set, i.e.,
given a set $P$, find a minimum set $W$ such that no two Voronoi
regions of points from $P$ are adjacent in the Voronoi diagram
$VD(P\cup W)$, which we know is equivalent to having
$\DG^-(P,W)=\varnothing$, for the Euclidean metric, and
$\SG^-(P,W)=\varnothing$, for $L_{\infty}$.  From practical point
of view, the question of computing efficiently a small (i.e.,
approximating the smallest-size one) discriminating
set seems possibly the most relevant.

\paragraph*{Acknowledgments.} We are grateful to Pankaj K.\
Agarwal for helpful discussions. In particular, all main ideas
underlying the algorithm in Theorem \ref{thm:sweepingGhosts} were suggested by him.

%

\end{document}